\DeclareMathAlphabet{\mathpzc}{OT1}{pzc}{m}{it}
\begin{document}

\theoremstyle{plain}
\newtheorem{theorem}{Theorem}[section]
\newtheorem{lemma}[theorem]{Lemma}
\newtheorem{proposition}[theorem]{Proposition}
\newtheorem{corollary}[theorem]{Corollary}
\newtheorem{claim}[theorem]{Claim}
\newtheorem{definition}[theorem]{Definition}

\theoremstyle{definition}
\newtheorem{remark}[theorem]{Remark}
\newtheorem{note}[theorem]{Note}
\newtheorem{example}[theorem]{Example}
\newtheorem{assumption}[theorem]{Assumption}
\newtheorem*{notation}{Notation}
\newtheorem*{assuL}{Assumption ($\mathbb{L}$)}
\newtheorem*{assuAC}{Assumption ($\mathbb{AC}$)}
\newtheorem*{assuEM}{Assumption ($\mathbb{EM}$)}
\newtheorem*{assuES}{Assumption ($\mathbb{ES}$)}
\newtheorem*{assuM}{Assumption ($\mathbb{M}$)}
\newtheorem*{assuMM}{Assumption ($\mathbb{M}'$)}
\newtheorem*{assuL1}{Assumption ($\mathbb{L}1$)}
\newtheorem*{assuL2}{Assumption ($\mathbb{L}2$)}
\newtheorem*{assuL3}{Assumption ($\mathbb{L}3$)}

\newcommand{\Law}{\ensuremath{\mathop{\mathrm{Law}}}}
\newcommand{\loc}{{\mathrm{loc}}}
\newcommand{\Log}{\ensuremath{\mathop{\mathcal{L}\mathrm{og}}}}
\newcommand{\Meixner}{\ensuremath{\mathop{\mathrm{Meixner}}}}

\let\SETMINUS\setminus
\renewcommand{\setminus}{\backslash}

\def\stackrelboth#1#2#3{\mathrel{\mathop{#2}\limits^{#1}_{#3}}}

\def\blue{\color{blue}}
\def\red{\color{red}}

\renewcommand{\theequation}{\thesection.\arabic{equation}}
\numberwithin{equation}{section}

\newcommand\llambda{{\mathchoice
      {\lambda\mkern-4.5mu{\raisebox{.4ex}{\scriptsize$\backslash$}}}
      {\lambda\mkern-4.83mu{\raisebox{.4ex}{\scriptsize$\backslash$}}}
      
{\lambda\mkern-4.5mu{\raisebox{.2ex}{\footnotesize$\scriptscriptstyle\backslash$
}}}
      
{\lambda\mkern-5.0mu{\raisebox{.2ex}{\tiny$\scriptscriptstyle\backslash$}}}}}

\newcommand{\prozess}[1][L]{{\ensuremath{#1=(#1_t)_{0\le t\le T}}}\xspace}
\newcommand{\prazess}[1][L]{{\ensuremath{#1=(#1_t)_{0\le t\le T^*}}}\xspace}

\newcommand{\lijepoa}{{\mathcal{A}}}
\newcommand{\lijepob}{{\mathcal{B}}}
\newcommand{\lijepoc}{{\mathcal{C}}} 
\newcommand{\lijepod}{{\mathcal{D}}} 
\newcommand{\lijepoe}{{\mathcal{E}}}
\newcommand{\lijepof}{{\mathcal{F}}}
\newcommand{\lijepog}{{\mathcal{G}}}
\newcommand{\lijepok}{{\mathcal{K}}}
\newcommand{\lijepoo}{{\mathcal{O}}}
\newcommand{\lijepop}{{\mathcal{P}}}
\newcommand{\lijepoh}{{\mathcal{H}}}
\newcommand{\lijepom}{{\mathcal{M}}}
\newcommand{\lijepor}{{\mathcal{R}}}
\newcommand{\lijepou}{{\mathcal{U}}}
\newcommand{\lijepov}{{\mathcal{V}}}
\newcommand{\lijepoy}{{\mathcal{Y}}}
\newcommand{\er}{{\mathbb{R}}}
\newcommand{\ce}{{\mathbb{C}}}
\newcommand{\erd}{{\mathbb{R}^{d}}}
\newcommand{\en}{{\mathbb{N}}}
\newcommand{\de}{{\mathrm{d}}}
\newcommand{\De}{{\mathrm{D}}}
\newcommand{\im}{{\mathrm{i}}}
\newcommand{\indik}{{1}}
\newcommand{\D}{{\mathbb{D}}}
\newcommand{\E}{{\mathbb{E}}}
\newcommand{\N}{{\mathbb{N}}}
\newcommand{\Q}{{\mathbb{Q}}}
\renewcommand{\P}{{\mathbb{P}}}
\newcommand{\dd}{\operatorname{d}\!}
\newcommand{\ii}{\operatorname{i}\kern -0.8pt}
\newcommand{\Var}{\operatorname{Var }\,}
\newcommand{\dt}{\operatorname{d}\!t}   
\newcommand{\ds}{\operatorname{d}\!s}   
\newcommand{\dy}{\operatorname{d}\!y }    
\newcommand{\du}{\operatorname{d}\!u}  
\newcommand{\dv}{\operatorname{d}\!v}   
\newcommand{\dx}{\operatorname{d}\!x}   
\newcommand{\dq}{\operatorname{d}\!q}   
\newcommand{\fpoint}{\frac{\de}{\de t} f}
\newcommand{\gpoint}{\frac{\de}{\de t} g}
\newcommand{\varthetapoint}{\frac{\de }{\de t} \vartheta}
\newcommand{\gammapoint}{\frac{\de }{\de t} \gamma} 

\def\EM{\ensuremath{(\mathbb{EM})}\xspace}
\def\mg{martingale\xspace}
\def\smmg{semimartingale\xspace}
\def\smmgs{semimartingales\xspace}

\def\lmm{LIBOR market model\xspace}
\def\fpm{forward price model\xspace}
\def\mfm{Markov-functional model\xspace}
\def\alm{affine LIBOR model\xspace}
\def\lmms{LIBOR market models\xspace}
\def\fpms{forward price models\xspace}
\def\mfms{Markov-functional models\xspace}
\def\alms{affine LIBOR models\xspace}

\def\inK{\ensuremath{{k\in\mathcal{K}}}}
\def\inKn{\ensuremath{{k\in\mathcal{K}\setminus\{N\}}}}

\def\SL{\ensuremath{\mathbb{SL}}\xspace}

\def\I{\mathcal{I}}

\def\e{\mathrm{e}}
\def\lib{LIBOR\xspace}
\def\lev{L\'evy\xspace}
\def\loc{\mathrm{loc}}

\newcommand{\la}{\langle}
\newcommand{\ra}{\rangle}

\newcommand{\Norml}[1]{%
{|}\kern-.25ex{|}\kern-.25ex{|}#1{|}\kern-.25ex{|}\kern-.25ex{|}}

\newcommand{\Lip}{$\mathit{Lip}$}
\newcommand{\Int}{$\mathit{Int}$}
\newcommand{\Drift}{$\mathit{Drift}$}

\renewcommand{\1}{\mathds{1}}
\def\R{\ensuremath{\mathbb{R}}}


\title{A unified view of LIBOR models}

\author[K. Glau]{Kathrin Glau}
\author[Z. Grbac]{Zorana Grbac}
\author[A. Papapantoleon]{Antonis Papapantoleon}

\address{Center for Mathematics, Technical University of Munich, Parkring 
	 11, 85748 Garching b. M\"unchen, Germany}
\email{kathrin.glau@tum.de}

\address{Laboratoire de Probabilit{\'e}s et Mod\`eles Al{\'e}atoires, 
	 Universit{\'e} Paris Diderot, 75205 Paris Cedex 13, France}
\email{grbac@math.univ-paris-diderot.fr}

\address{Institute of Mathematics, TU Berlin, Stra\ss e des 17. Juni 136,
         10623 Berlin, Germany}
\email{papapan@math.tu-berlin.de}

\keywords{LIBOR, forward price, semimartingales, LIBOR market models, L\'evy 
	  forward price models, affine LIBOR models}
\subjclass[2010]{91G30, 60G44}
\thanks{Financial support from the PROCOPE project ``Financial markets in 
transition: mathematical models and challenges'' and the Europlace Institute of 
Finance project ``Post-crisis models for interest rate markets'' is gratefully 
acknowledged.}

\maketitle\frenchspacing\pagestyle{myheadings}

\begin{abstract}
We provide a unified framework for modeling LIBOR rates using general 
semimartingales as driving processes and generic functional forms to describe 
the evolution of the dynamics. We derive sufficient conditions for the model to 
be arbitrage-free which are easily verifiable, and for the LIBOR rates to be 
true martingales under the respective forward measures. We discuss when the 
conditions are also necessary and comment on further desirable properties such 
as those leading to analytical tractability and positivity of rates. This 
framework allows to consider several popular models in the literature, such as 
LIBOR market models driven by Brownian motion or jump processes, the L\'evy 
forward price model as well as the affine LIBOR model, under one umbrella. 
Moreover, we derive structural results about LIBOR models and show, in 
particular, that only models where the forward price is an exponentially affine 
function of the driving process preserve their structure under different 
forward measures.
\end{abstract}

\section{Introduction} 

The \lib and EURIBOR interest rates rank among the most important interest rates 
worldwide. They are determined on a daily basis by a panel of banks for a number 
of maturities, while LIBOR is also determined for several currencies. LIBOR and 
EURIBOR serve as underlying rates for an enormous amount of financial 
transactions. In 2012, the outstanding values of contracts with \lib as 
reference were estimated at roughly summing up to $300$ trillion USD; see 
\citet{Wheatley2012}. Therefore, the development of suitable policies and 
regulations for the fair calculation of \lib and EURIBOR, as well as of 
mathematical models for the fair evaluation of interest rate products, is 
essential for the financial industry and also serves the general public 
interest.  

The modeling of the dynamics of LIBOR and EURIBOR rates is a challenging task 
due to the high dimensionality of the modeled objects. The major difference, 
from a modeling point of view, between interest rates and stock prices lies in 
the fact that stock prices are observed at each time point as a single value, 
once the difference between bid and ask prices is ignored, while interest rates 
are observed at each time point for several maturities. Moreover, these 
different rates (for the different maturities) are interdependent. Their joint 
modeling is indispensable because they jointly enter already the basic interest 
rate derivatives as underlying rates. In addition, the rates for different 
period lengths can no longer be derived from simple no-arbitrage relations. 
Indeed, the financial crisis of 2007--2009 has fundamentally changed the 
attitude of market participants towards risks in the interbank sector, regarding 
in particular counterparty and liquidity risk, which have a direct effect on the 
\lib rates for different lending periods; see, for example, 
\citet{Filipovic_Trolle_2011}. Summarizing, \lib modeling presents a challenge 
to jointly model the rates for different maturities and periods in an 
arbitrage-free way and such that the resulting pricing formulas are fast and 
accurately computable for all liquid derivatives, such as caps and swaptions. 
In this work, we provide a unified mathematical foundation for some of the most 
important of the existing LIBOR models in the literature. On this basis we gain 
valuable structural insight in modeling \lib rates. In particular we derive 
sufficient conditions for the validity of mandatory model features such as 
arbitrage-freeness and investigate those that typically support computational 
ease.

The seminalarticles by \citet{BraceGatarekMusiela97} and Miltersen, Sandmann, and Sondermann
\citeyearpar{MiltersenSandmannSondermann97} introduced the \lib Market Model (LMM), 
that became known also as the BGM model. The celebrity of the model certainly 
is, at least partly, owed to the fact that the BGM model reproduces the market 
standard Black's formula for caps. Moreover, the backward construction of \lib 
rates in the LMM presented in \citet{MusielaRutkowski97b} has proven to be 
extendible beyond models driven by Brownian motion. The article by 
\citet{EberleinOezkan05} introduced one of the first \lib models driven by jump 
processes and also proposed the L\'evy-driven forward price model. In 
\citeauthor{Jamshidian97} \citeyearpar{Jamshidian97,Jamshidian99} \lib models 
driven by general semimartingales were presented. In interest rate modeling, 
jump processes have several advantages. Firstly, just as in stock price 
modeling, their distributional flexibility allows to better capture the 
empirical distributions of logarithmic returns, see for instance 
\citeauthor{EberleinKluge04} \citeyearpar{EberleinKluge04,EberleinKluge05}. 
Secondly, the traditional way to jointly model the rates for different 
maturities would suggest to introduce one component of a multi-dimensional 
Brownian motion for each maturity, so as to introduce one stochastic factor for 
each source of risk. A L\'evy process with an infinite jump activity, in 
contrast, introduces infinitely many sources of risk, already as a 
one-dimensional process. In this regard, jump processes show their potential to 
reduce the dimension of the related computational problems for pricing and 
hedging. However, they also bring along a new level of technical challenges, in 
particular the measure changes between forward measures become more involved and 
the backward construction typically requires a more sophisticated justification. 
Additionally, various extensions of the \lib market model to stochastic 
volatility have also appeared in the literature, cf. \citet{Wu_Zhang_2006}, 
\citet{BelomestnyMathewSchoenmakers06} and 
\citet{Ladkau_Schoenmakers_Zhang_2013}. Recently, a modeling approach under one 
single forward measure, the terminal measure, has been proposed in 
\citet{KellerResselPapapantoleonTeichmann09}, which is based on affine 
processes. We refer to \citet{Schoenmakers05} and \citet{Papapantoleon10b} for 
an overview of the modeling approaches and the existing literature. Regarding 
the post-crisis LIBOR models we refer to \citet{BianchettiMorini13} and 
\citet{GrbacRunggaldier15}.

In view of the high level of technical sophistication that \lib models have 
reached in today's literature and also of the new demands they are faced with, 
we propose an abstract perspective on \lib modeling in order to obtain:
\begin{itemize}
\item a unified view on different modeling approaches, such as the \lib 
      market models, the L\'evy forward price models and the affine \lib models;
\item transparent conditions that guarantee:
\begin{itemize}
\item[$\circ$] positivity of bond prices and arbitrage-freeness -- the 
	       fundamental model requirements;
\item[$\circ$] martingality of the forward prices under their corresponding 
	       forward measures, which paves the way for change of numeraire 
	       techniques and tractable pricing formulas;
\item[$\circ$] structure preservation under different forward measures, a 
	       feature that is beneficial in connection with change of 
	       numeraire techniques;
\end{itemize}
\item the validity of further desirable model properties that lead to 
      analytically tractable models.
\end{itemize}

This article is structured as follows: in Section \ref{sec:2} we introduce the 
main modeling objects and formalize model axioms as well as desirable model 
properties that entail computational tractability. In Section \ref{sec:3}, we 
provide two general modeling approaches based on general semimartingales and 
generic functional forms for the evolution of rates, and derive sufficient 
conditions for the arbitrage-freeness of the models and for the forward price 
processes to be uniformly integrable martingales under their corresponding 
forward measures; positivity of bond prices holds by construction. On this basis 
we derive conditions that imply positivity of \lib rates and  ensure 
computational tractability. As an interesting additional insight we show that 
essentially only models in which the forward price processes are exponentials of 
an affine function of a semimartingale are structure preserving under different 
forward measures. In Section \ref{sec:4}, we present several \lib models in the 
guise of the general modeling framework and investigate sufficient conditions 
that lead to further essential model features. Finally, required results from 
semimartingale theory are derived in the appendix. 
\section{Axioms and desirable properties} 
\label{sec:2}

Let 
$(\Omega,\lijepof=\lijepof_{T_*},\mathbb{F}=(\lijepof_{t})_{t\in[0,T_*]},\P_N)$ 
denote a complete stochastic basis in the sense of 
\citet[Def.~I.1.3]{JacodShiryaev03}, where $T_*$ denotes a finite time horizon. 
Consider a discrete tenor structure $\mathcal{T}:=\{0=T_{0}<\ldots<T_{N}\le 
T_*\}$ with $\delta_k=T_{k}-T_{k-1}$ for $k\in\mathcal{K}:=\{1,\ldots,N\}$, and 
define $\bar{\mathcal{K}}:=\mathcal{K} \setminus \{N\}$. We assume that 
zero-coupon bonds with maturities $T_1, \ldots, T_N$ are traded in the market 
and denote by $B(t,T_k)$ the time-$t$ price of the zero-coupon bond with 
maturity $T_k$, for all $\inK$. We associate to each date $T_k$ the 
\textit{numeraire pair} $(B(\cdot,T_k),\P_k)$, meaning that bond prices 
discounted by the numeraire $B(\cdot,T_k)$ are $\P_k$-local martingales, for 
all $\inK$. The measures $\P_k$ are then called \textit{forward (martingale) 
measures}. Moreover, let $\mathcal{M}_\loc(\P)$ denote the set of local 
martingales with respect to the measure $\P$.

The \emph{forward \lib rate}, denoted by $L(t,T_{k})$, is a discretely 
compounded interest rate determined at time $t$ for the future accrual interval 
$[T_{k}, T_{k+1}]$. It is related to bond prices via 
\begin{equation}\label{eq:libor}
L(t, T_{k}) 
  = \frac{1}{\delta_{k}} \left( \frac{B(t, T_{k})}{B(t, T_{k+1})} - 1\right), 
  \quad \quad t \in [0, T_{k}],
\end{equation}
for $k \in \bar{\mathcal{K}}$. The \emph{forward price process} $F(\cdot, T_k, 
T_n)$ is defined as follows
\begin{equation}\label{eq:forward-price}
F(t, T_k, T_n)= \frac{B(t, T_k)}{B(t, T_n)}, 
	\quad \quad t \in[0, T_k \wedge T_n],
\end{equation}
for all $k, n\in \mathcal{K}$. The forward \lib rate $L(t, T_{k})$ and the 
forward price $F(t, T_{k}, T_{k+1})$ are connected via
\begin{equation}\label{eq:connection-Libor-forward-price}
F(t, T_{k}, T_{k+1}) = 1 + \delta_{k} L(t, T_{k}).
\end{equation}

We will describe in the sequel several axioms and properties that \lib models 
should posess in order to be economically meaningful on the one hand, and 
applicable in practice on the other. In particular, we will distinguish between 
three different groups of attributes. The first group consists of 
\textit{necessary axioms}, which are needed to build a sound financial model. 
These are:
\begin{enumerate}[label=$(\mathbb{A}1)$]
\item\label{a1} Bond prices are \textit{positive}, i.e. $B(\cdot,T_k)>0$ 
		for all $k\in\mathcal{K}$;
\end{enumerate}
\begin{enumerate}[label=$(\mathbb{A}2)$]
\item\label{a2} The model is \textit{arbitrage-free}, i.e. 
		$\frac{B(\cdot,T_k)}{B(\cdot,T_N)} \in \mathcal{M}_\loc(\P_N)$ 
		for all $k\in\mathcal{K}$.
\end{enumerate}
The first axiom is justified since bond prices are traded assets with a positive 
payoff, thus should have a positive price. The second axiom precludes the 
existence of arbitrage opportunities and could be equivalently formulated under 
any forward measure, i.e. the model is arbitrage-free if 
$\frac{B(\cdot,T_k)}{B(\cdot,T_n)} \in \mathcal{M}_\loc(\P_n)$ for all 
$k\in\mathcal{K}$ and some $n\in\mathcal{K}$; see also 
\citet[\S14.1.3]{MusielaRutkowski05} and \citet{Klein_Schmidt_Teichmann_2015}.

The second group consists of \textit{tractability properties}, which simplify 
computations in the model. Out of several possible choices, we will concentrate 
on the following: 
\begin{enumerate}[label=$(\mathbb{B}1)$]
\item\label{b1} Forward prices are \textit{true martingales}, i.e. 
		$\frac{B(\cdot,T_k)}{B(\cdot,T_N)}\in\mathcal{M}(\P_N)$ for all 
		$k\in\mathcal{K}$.
\end{enumerate}
\begin{enumerate}[label=$(\mathbb{B}2)$]
\item\label{b2} The model is \textit{structure preserving}, i.e. the \smmg
		characteristics of the driving process are transformed in a 
		deterministic way under forward measures.
\end{enumerate}
\begin{enumerate}[label=$(\mathbb{B}3)$]
\item\label{b3} Each \lib rate is a \textit{Markov process} under its 
		corresponding forward measure.
\end{enumerate}
\begin{enumerate}[label=$(\mathbb{B}4)$]
\item\label{b4} The initial \lib rates are direct model inputs.
\end{enumerate}
These properties are not necessary to build an arbitrage-free model, but are 
very convenient in several aspects. The first property allows to compute option 
prices as conditional expectations and to relate the forward measures via a 
density process. Hence, several option pricing formulas can be simplified 
considerably by changing to a more convenient forward measure. The second 
property yields that the processes driving each LIBOR rate remain in the same 
class of processes under each forward measure. Moreover, \ref{b1} combined with 
\ref{b2} typically allows to derive closed-form or semi-analytical pricing 
formulas for liquid products such as caps and swaptions. \ref{b3} also allows to 
simplify certain option pricing problems and use PDE methods. Additionally, if 
the initial term structure is a direct input in the model, i.e.\ \ref{b4} holds, 
then we avoid using a numerical procedure to fit the currently observed bond 
prices.

Finally, we shall also discuss the following model property:
\begin{enumerate}[label=$(\mathbb{C})$]
\item\label{c} \lib rates are always non-negative.
\end{enumerate}
Until the recent financial crisis, \lib rates were always non-negative, hence 
the possibility of rates becoming negative has been considered as a drawback of 
a model. As a consequence, several LIBOR models have been designed to produce 
non-negative \lib rates. Nowadays, the quoted \lib rates are at extremely low 
levels and even negative \lib rates for several tenors have been reported over 
longer time periods, which prompts us to take this into account in the modeling. 
Therefore, it is important to know which models allow for negative rates as well 
as which of the existing models for positive rates can easily be adapted to 
allow the rates to go below zero. Moreover, the techniques used to construct 
non-negative \lib rates can often be adapted to model other related 
non-negative quantities such as spreads in multiple curve models. 
\section{A unified construction of \lib models}
\label{sec:3}

Models for the evolution of \lib rates are constructed in the literature either 
using a backward induction approach, where rates are specified successively 
under different forward measures, or by modeling all rates simultaneously under 
one measure, typically the terminal forward measure. The former approach has 
been used for the construction of \lib market models and forward price models, 
while the latter is used for affine \lib models and Markov functional models. 
The aim of this section is to offer a unified construction of \lib models by 
emphasizing the common features in both approaches.

\subsection{Modeling rates via backward induction}
\label{sec:back-ind}

The aim of this subsection is to formulate sufficient conditions and to present 
a generic construction of \lib (market) models using the backward induction 
approach. The driving process is a general semimartingale and the functional 
form of the dynamics is also generic.

The following key observations of \citet{MusielaRutkowski97b} lie at the heart 
of the constructions via backward induction:
\begin{itemize}
\item A model for the \lib rates \((L(\cdot, T_k))_{k \in \bar{\mathcal{K}}}\) 
      is \textit{arbitrage-free} if \(L(\cdot, T_k)\) is a \(\P_{k+1}\)-local  
      martingale for all \(k \in \bar{\mathcal{K}}\).
\item The \textit{forward measures} \((\P_k)_{k \in \bar{\mathcal{K}}}\) are 
      related via the Radon-Nikodym derivatives
      \begin{align}\label{eq-def-PTk}
	\frac{\de \P_k}{\de \P_{k+1}} = \frac{1 + \delta_k L(T_k, T_k)}{1 + 
	\delta_k L(0, T_k)},\qquad \text{ for all } k \in \bar{\mathcal{K}}.
      \end{align}
\end{itemize}
Therefore, in order to construct a \lib model it suffices to specify the 
dynamics either of the \lib rate $L(\cdot, T_k)$ itself or of the forward price 
process $F(\cdot,T_k,T_{k+1})=1+\delta  L(\cdot,T_k)$ for all $k \in 
\bar{\mathcal{K}}$, and both choices determine the densities in 
\eqref{eq-def-PTk} as well.

Our construction is based on specifying an exponential semimartingale for the 
dynamics of the forward price process with the following functional form:
\begin{equation}\label{specify-F}
F(\cdot,T_k,T_{k+1}) = \e^{ f^k(\cdot,X) },
\end{equation}
where $f^k$ are functions for each $k \in \bar{\mathcal{K}}$ and $X$ is a 
semimartingale. This approach unifies the construction of the \lmms and the 
\fpms by appropriate choices of $f^k$ and $X$ that will be discussed in Section 
\ref{sec:3}.

Consider an $\R^{d}$-valued semimartingale $X=(X_{t})_{0\leq t \leq T_N}$ on 
$(\Omega,\lijepof,\mathbb{F},\P_N)$ and a collection of functions $f^{k}: 
[0,T_N] \times \erd \to \er$ for all $k \in \bar{\mathcal{K}}$, which satisfy 
the following assumptions:
\begin{enumerate}[label=$(\mathbb{LIP})$]
\item\label{lip}  
  The function $f^{k}$ belongs to $C^{1,2}([0,T_N] \times \erd)$ and is 
  globally Lipschitz, i.e. 
  $$|f^{k}(t, x) - f^{k} (t, y)| \leq K^{k} |x-y|, $$ 
  for every $t\in[0,T_N]$ and any $x,y \in \erd$, where $K^{k}>0$ is a 
  constant. 
\end{enumerate}
\begin{enumerate}[label=$(\mathbb{INT})$]
\item\label{int} 
  The process $X$ is an $\R^{d}$-valued semimartingale with absolutely 
  continuous characteristics $(b^N,c^N,F^N)$ under $\P_N$, such that the 
  following conditions hold
  \begin{equation}\label{eq:fin-mom}
  \int_{0}^{T_N} \int_{\R^{d}} \Big\{ |x|^{2} 1_{\{|x|\le1\}} + |x|    
    \e^{K |x|} 1_{\{|x|>1\}}  \Big\} F^N_{t}(\dx) \dt
    < C_{1}
  \end{equation}
  and
  \begin{equation}
  \int_{0}^{T_N} \|c^N_{t} \| \de t < C_{2},
  \end{equation}
  for some constants $C_{1}, C_{2}>0$ and $K=\sum_{k=1}^{N-1} K^{k}$.
\end{enumerate}
We denote by $\|\cdot\|$ the Euclidean norm on $\R^d$ and by 
$\langle\cdot,\cdot\rangle$ the associated scalar product.

\begin{remark}
The characteristic triplet of the semimartingale $X$ under the forward measure 
$\P_k$ is denoted by $(b^{k},c^{k},F^{k})$, while the truncation function can 
always be chosen the identity (i.e. $h(x)=x$) due to \eqref{eq:fin-mom}. 
Moreover, we use the standard conventions $\sum_\emptyset=0$ and 
$\prod_\emptyset=1$.
\end{remark}

\begin{theorem}\label{thm:back-ind}
Consider an $\R^{d}$-valued semimartingale $X$ and functions $f^{k}$ such that 
Assumptions \ref{lip} and \ref{int} are satisfied for each 
$k\in\bar{\mathcal{K}}$. Assume that the forward price processes are modeled 
via 
\begin{align}\label{eq:def-fpm-smmg}
F(t,T_{k},T_{k+1}) = \e^{f^k(t,X_t)}, \qquad t\in[0,T_k],
\end{align}
and the following \emph{drift condition} is satisfied
\begin{align}\label{eq:drift}
\la \De f^{k}(t, X_{t-}), b^N \ra \nonumber 
 &= - \fpoint^{k}(t, X_{t-}) - \frac{1}{2} \sum_{i,j=1}^{d} 
  \De_{ij}^{2} f^{k}(t, X_{t-}) (c_{t}^N)^{ij} \\
 &\quad - \frac{1}{2} \la \De f^{k}(t, X_{t-}), c_t^N 
    \De f^{k}(t, X_{t-})\ra   \nonumber \\ \tag{$\mathbb{DRIFT}$}
 &\quad - \sum_{j=k+1}^{N-1} \la \De f^{k}(t, X_{t-}), c_t^N 
    \De f^{j}(t, X_{t-})\ra \\ \nonumber    
 &\!\!\!\!\!\! - \int_{\R^{d}} \Big\{ \Big( 
    \e^{f^{k}(t,X_{t-}+x)-f^{k}(t,X_{t-})} - 1 \Big)
  \\ \nonumber
 & \times \prod_{j=k+1}^{N-1} 
    \e^{f^{j}(t,X_{t-}+x)-f^{j}(t,X_{t-})}
    - \la \De f^{k}(t, X_{t-}), x \ra \Big\} F^N_{t}(\dx)
\end{align}
for each $k \in \bar{\mathcal{K}}$. Then, the measures $(\P_{k})_{k \in 
\bar{\mathcal{K}}}$ defined via 
\begin{align}
\frac{\de \P_{k}}{\de \P_{k+1}} 
  = \frac{ \e^{f^{k}(T_k, X_{T_k})}}{ \e^{f^{k}(0, X_{0})}}
\end{align}
are equivalent forward measures and the forward prices processes 
$F(\cdot,T_{k},T_{k+1})$ are uniformly integrable martingales with respect to 
$\P_{k+1}$, for each $k \in \bar{\mathcal{K}}$. In particular, the model 
$(F(\cdot,T_k,T_{k+1}))_{k \in \bar{\mathcal{K}}}$ is arbitrage-free and 
satisfies Axioms \ref{a1} and \ref{a2}, as well as Property \ref{b1}.
\end{theorem}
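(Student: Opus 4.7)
The natural approach is a backward induction on $k$ from $N-1$ down to $1$. At each step I want to (i) compute the $\P_{k+1}$-characteristics of $X$ via Girsanov's theorem for semimartingales, (ii) apply Itô's formula to $\e^{f^k(t,X_t)}$ under $\P_{k+1}$ and verify that the drift condition \eqref{eq:drift} is precisely what forces the predictable finite-variation part to vanish, and (iii) upgrade the resulting local martingale to a uniformly integrable one, so that the Radon--Nikodym derivative defining $\P_k$ is admissible and the induction can continue. Positivity of the forward prices is built into \eqref{eq:def-fpm-smmg}, so Axiom \ref{a1} is immediate. Once Property \ref{b1} is in place, Axiom \ref{a2} follows from the telescoping identity $B(t,T_k)/B(t,T_N) = \prod_{j=k}^{N-1} F(t,T_j,T_{j+1})$, which up to a multiplicative constant is the density process $Z^k_t := \de \P_k|_{\mathcal{F}_t}/\de \P_N|_{\mathcal{F}_t}$ and hence a $\P_N$-martingale.

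For the base case $k=N-1$, the sum and the product over $j=k+1,\ldots,N-1$ in \eqref{eq:drift} are empty, and the identity reduces to the standard condition for $\e^{f^{N-1}(t,X_t)}$ to be a $\P_N$-local martingale, verifiable by a direct application of Itô's formula. For the inductive step, suppose $\P_{j+1},\ldots,\P_N$ have already been constructed and each $F(\cdot,T_j,T_{j+1})$ is a UI $\P_{j+1}$-martingale for $j=k+1,\ldots,N-1$. Then the telescoping density
\begin{equation*}
Z^{k+1}_t \;=\; \frac{\de \P_{k+1}|_{\mathcal{F}_t}}{\de \P_N|_{\mathcal{F}_t}}
\;=\; \prod_{j=k+1}^{N-1} \frac{\e^{f^j(t,X_t)}}{\e^{f^j(0,X_0)}}
\end{equation*}
is a positive $\P_N$-UI martingale, and Girsanov's theorem yields the $\P_{k+1}$-characteristics of $X$ as $c^{k+1}=c^N$, $F^{k+1}_t(\dx)= \prod_{j=k+1}^{N-1}\e^{f^j(t,X_{t-}+x)-f^j(t,X_{t-})}\, F^N_t(\dx)$, and a corrected drift $b^{k+1}$ obtained from $b^N$ by adding $\sum_{j=k+1}^{N-1} c^N_t\,\De f^j(t,X_{t-})$ together with the jump correction $\int x\Big(\prod_{j=k+1}^{N-1}\e^{f^j(t,X_{t-}+x)-f^j(t,X_{t-})}-1\Big) F^N_t(\dx)$. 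Substituting these into the $\P_{k+1}$-drift of $\e^{f^k(t,X_t)}$ obtained via Itô, the $\la \De f^k, x\ra$ contributions coming from the $b^{k+1}$ correction and from the new $F^{k+1}$ cancel, and one recovers exactly the right-hand side of \eqref{eq:drift}. Hence the drift vanishes and $F(\cdot,T_k,T_{k+1})$ is a $\P_{k+1}$-local martingale.

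The main obstacle is the upgrade from local to true martingale, because each Girsanov transformation multiplies the $\P_N$-jump measure by the factor $\prod_{j=k+1}^{N-1}\e^{f^j(t,X_{t-}+x)-f^j(t,X_{t-})}$, so exponential moment bounds have to hold uniformly across every intermediate forward measure encountered in the induction. Assumption \ref{int} is tailored to this: the exponent $K=\sum_{j=1}^{N-1}K^j$ in \eqref{eq:fin-mom} dominates any product of the Girsanov factors, so a single integrability hypothesis under $\P_N$ suffices to control both $Z^{k+1}$ and $Z^{k+1}\e^{f^k(t,X_t)}$ simultaneously at every step. Together with the Lipschitz bound $\e^{f^k(t,X_t)} \le C\,\e^{K^k|X_t|}$ from \ref{lip} and a standard exponential-moment / de la Vall\'ee Poussin argument (which I would expect to be pinned down as an auxiliary lemma in the appendix), this yields uniformly bounded moments of $\{\e^{f^k(t,X_t)}:t\le T_k\}$ under $\P_{k+1}$, hence uniform integrability, which combined with local martingality gives Property \ref{b1} and closes the induction.
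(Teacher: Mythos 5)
Your proposal follows essentially the same route as the paper's proof: backward induction, Lemma \ref{SLgirsanov} to obtain the $\P_{k+1}$-characteristics \eqref{kk-characteristics} of $X$, the observation that substituting these into the local-martingale condition for $\e^{f^k(\cdot,X)}$ reproduces \eqref{eq:drift}, and the recognition that Assumption \ref{int} with $K=\sum_{j=1}^{N-1}K^j$ is designed so that a single integrability hypothesis under $\P_N$ survives every exponential tilt of the jump measure along the induction; the cancellation of the $\la \De f^k,x\ra$ terms you describe is exactly the ``straightforward calculation'' in the paper's general step, and the forward-measure/arbitrage-freeness conclusion via the telescoping product matches the paper's final step (there carried out with Proposition III.3.8 of Jacod--Shiryaev).

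The one step that, as literally written, would fail is the upgrade from local to true martingale. Uniform integrability of the family $\{\e^{f^k(t,X_t)}:t\le T_k\}$ at deterministic times, combined with the local martingale property, does not yield the martingale property: the inverse three-dimensional Bessel process is bounded in $L^2$ on compact time intervals (hence that family is UI) and is nevertheless a strict local martingale. What is needed is a class-(D)-type criterion, and this is precisely what the paper's Proposition \ref{propUImart} provides (resting on Proposition 3.4 of Criens, Glau and Grbac): a \emph{deterministic} bound on the characteristics of the exponent $f^k(\cdot,X)$ under $\P_{k+1}$, i.e.\ conditions \eqref{firstUI} and \eqref{secondUI}, directly implies that the exponential is a uniformly integrable martingale. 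Your substantive input --- the Lipschitz bounds and the domination of the accumulated Girsanov factors by $\e^{K|x|}$ --- is exactly what verifies those bounds, so the repair is to invoke such a criterion rather than the chain ``bounded exponential moments $\Rightarrow$ UI $\Rightarrow$ true martingale''; with that substitution your argument coincides with the paper's.
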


\begin{proof}
The statement is proved via backward induction, motivated by the backward 
construction of \lib and forward price models.

\textit{First step:} 
We start from the forward price process $F(\cdot,T_{N-1},T_N)$ whose dynamics 
are 
\begin{equation*}
F(t,T_{N-1},T_N) = \e^{ f^{N-1}(t, X_t) }, \quad \quad t \in [0,T_{N-1}],
\end{equation*}
and examine its properties under the measure $\P_N$. The function $f^{N-1}$ 
satisfies \ref{lip} and the process $X$ satisfies \ref{int}, hence the process 
$f^{N-1}(\cdot,X)$ is an exponentially special 
\smmg by Proposition \ref{propUImart}. Using Proposition \ref{proplocmart}, we 
have that $F(\cdot,T_{N-1},T_N)$ is a $\P_N$-local martingale if the following 
condition holds:
\begin{align}\label{locmartcond}
&\!\!\!\! \la \De f^{N-1}(t, X_{t-}), b^N \ra  \nonumber\\
  &= -\fpoint^{N-1}(t, X_{t-}) -  \frac{1}{2} \sum_{i,j=1}^{d} \De_{ij}^{2} 
      f^{N-1}(t, X_{t-}) (c_{t}^N)^{ij} \nonumber \\
  &\quad - \frac{1}{2} \la \De f^{N-1}(t, X_{t-}), c^N 
    \De f^{N-1}(t, X_{t-})\ra  \\ \nonumber
  &\,\, - \int_{\erd} \Big( \e^{f^{N-1}(t,X_{t-}+x) - f^{N-1}(t,X_{t-})}
   - 1 - \la \De f^{N-1}(t, X_{t-}), x\ra\Big) F^N_{t}(\de x),
\end{align}
which is actually \eqref{eq:drift} for $k=N-1$. Moreover, Proposition 
\ref{propUImart} yields that $F(\cdot,T_{N-1},T_N)$ is even a $\P_N$-uniformly 
integrable martingale. Therefore, we can use $F(\cdot,T_{N-1},T_N)$ as a 
density process to define the measure $\P_{N-1}$ via
$$
\frac{\de \P_{N-1}}{\de \P_N} \Big|_{\lijepof_{\cdot}} 
  = \frac{ F(\cdot, T_{N-1}, T_N)}{ F(0, T_{N-1}, T_N)}
  = \frac{ \e^{f^{N-1}(\cdot, X)}}{ \e^{f^{N-1}(0, X_{0})}},
$$
and the characteristics of the process $X$ under the measure $\P_{N-1}$ are 
provided by
\begin{align*}
b^{N-1}_t &= b^N_t + c^N_t \De f^{N-1} (t,X_{t-}) \\
  &\quad + \int_{\erd} \left( \e^{f^{N-1}(t, X_{t-}+x)-f^{N-1}(t, X_{t-})} 
      - 1 \right) x F^N_t (\de x)\\
c^{N-1}_t &= c^N_t\\
F^{N-1}_t (\de x) &=  \e^{f^{N-1}(t, X_{t-}+x)-f^{N-1}(t, X_{t-})} 
   F^N_{t} (\de x);
\end{align*}
cf. Lemma \ref{SLgirsanov}. 

Then, we proceed backwards by considering the `next' forward price process 
$F(\cdot,T_{N-2},T_{N-1})$ with dynamics
\begin{align}\label{eq:F-mod-2}
F(t,T_{N-2},T_{N-1}) = \e^{ f^{N-2}(t,X_{t}) }, 
  \quad \quad t \in [0,T_{N-2}],
\end{align}
and verifying that subject to \ref{lip}, \ref{int} and \eqref{eq:drift} it is a 
$\P_{N-1}$-uniformly integrable martingale. Thus, it can be used as a density 
process to define the measure $\P_{N-2}$. 

Next, we provide the general step of the backward induction.

\textit{General step:} 
Let $k \in \{1,\ldots,N-1\}$ be fixed and consider the process $X$, the 
functions $f^{k+1},\ldots,f^{N-1}$ and the measures $\P_{k+1},\ldots,\P_N$ 
which are defined recursively via
$$
\frac{\de \P_{k+1}}{\de \P_{k+2}} \Big|_{\lijepof_{\cdot}} 
  = \frac{ F(\cdot, T_{k+1}, T_{k+2})}{ F(0, T_{k+1}, T_{k+2})}
  = \frac{ \e^{ f^{k+1}(\cdot,X)}}{ \e^{ f^{k+1}(0,X_{0})}}. 
$$ 
Assume that the forward price processes $F(\cdot,T_l,T_{l+1})$ have been modeled 
as exponential semimartingales according to \eqref{eq:def-fpm-smmg} and are 
$\P_{l+1}$-uniformly integrable martingales, for all $l\in\{k+1,\dots,N-1\}$. By 
repeatedly applying Lemma \ref{SLgirsanov}, we derive the 
$\P_{k+1}$-characteristics of $X$, which have the form
\begin{align}\label{kk-characteristics}
b_t^{k+1} &= b^N_{t} + c^N_{t} \sum_{j=k+1}^{N-1} 
  \De f^{j} (t, X_{t-}) \nonumber\\\nonumber
&\quad + \int_{\R^{d}} \left( \prod_{j=k+1}^{N-1} \e^{f^{j} (t, X_{t-}+x) 
  -   f^{j} (t, X_{t-})} - 1 \right) x F^N_{t}(\de x)\\
c_t^{k+1} &= c_{t}^N\\ \nonumber
F_t^{k+1} (\de x) &= \prod_{j=k+1}^{N-1} \e^{f^{j} (t, X_{t-}+x) 
  - f^{j} (t, X_{t-})} F_{t}^N (\de x).  
\end{align}

Now, the forward price process $F(\cdot,T_{k},T_{k+1})$ with dynamics
$$
F(t, T_{k}, T_{k+1}) = \e^{f^{k}(t, X_{t})}, \quad t \in [0, T_{k-1}],
$$
is a $\P_{k+1}$-local martingale if the following condition holds
\begin{eqnarray}\label{locmartcond_k}  
\nonumber  
\lefteqn{\la \De f^{k}(t, X_{t-}), b^{k+1} \ra}\\
\nonumber & = & - \fpoint^{k}(t, X_{t-})
- \frac{1}{2} \sum_{i,j=1}^{d} \De_{ij}^{2} f^{k}(t, X_{t-}) 
(c_{t}^{k+1})^{ij}\\
&& - \frac{1}{2} \la \De f^{k}(t, X_{t-}), c^{k+1} \De f^{k}(t, 
X_{t-})\ra \\
&& -  \nonumber \int_{\erd} \Big( \e^{f^{k}(t, 
X_{t-}+x)-f^{k}(t,X_{t-})}
	 - 1 -  \la \De f^{k}(t, X_{t-}), x\ra\Big) F^{k+1}_{t}(\dx);
\end{eqnarray}
cf. Proposition \ref{proplocmart}. By replacing \eqref{kk-characteristics} into 
\eqref{locmartcond_k} we see, after some straightforward calculations, that the 
latter is equivalent to the \eqref{eq:drift} condition. We can also verify that 
conditions \eqref{firstUI} and \eqref{secondUI} from  Proposition 
\ref{propUImart} hold for the function $f^{k}$ that satisfies \ref{lip} and 
the process $X$ that satisfies \ref{int}. Indeed, we have
\begin{eqnarray*}
\int_{0}^{T_N} \| c^{k+1}_{t}\| \de t 
 = \int_{0}^{T_N} \| c^N_{t}\| \de t 
 < C_{2},
\end{eqnarray*}
hence condition \eqref{secondUI} holds. Moreover, using \ref{lip} and \ref{int} 
we get that
\begin{eqnarray*}
\lefteqn{\int_{0}^{T_N} \int_{\R^{d}} (|x|^{2} \wedge 1) 
  F^{k+1}_{t}(\dx) \dt 
	+ \int_{0}^{T_N} \int_{|x|>1} |x| \e^{K^{k}|x|} 
	  F^{k+1}_{t}(\dx) \dt}\\
&=& \int_{0}^{T_N} \int_{\R^{d}} (|x|^{2} \wedge 1) \prod_{j=k+1}^{N-1}
	\e^{f^{j}(t, X_{t-}+x) - f^{j} (t, X_{t-})} F_{t}^N (\dx) \dt \\
&& + \int_{0}^{T_N} \int_{|x|>1} |x| \e^{K^{k-1} |x|} \prod_{j=k+1}^{N-1} 
	\e^{f^{j} (t, X_{t-}+x) - f^{j} (t, X_{t-})} F_{t}^N (\dx) \dt \\
&\leq & \int_{0}^{T_N} \int_{\erd} \Big\{ (|x|^{2} \wedge 1) 
	\e^{\sum_{j=k+1}^{N-1} K^{j} |x|}  + 1_{\{|x|>1\}} |x| 
	\e^{\sum_{j=k}^{N-1} K^{j} |x|} \Big\} F_{t}^N (\dx) \dt \\
&\leq & {\text{const}} \cdot \int_{0}^{T_N} \int_{\erd} \Big\{ |x|^{2} 
	1_{\{|x|\le1\}} + |x| \e^{K |x|} 1_{\{|x|>1\}} \Big\} F_{t}^N 
	(\dx)\dt \\
&<& C_{1},
\end{eqnarray*}
where the second to last inequality holds because the exponential function is 
bounded in the unit hypercube and the Lipschitz constants are positive. Hence, 
condition \eqref{firstUI} holds as well. Thus, Proposition \ref{propUImart} 
yields that the process $f^{k}(\cdot,X)$ is exponentially special and the 
forward price process $F(\cdot,T_{k},T_{k+1})$ is a $\P_{k+1}$-uniformly 
integrable martingale. 

Therefore, exactly as in the previous steps we can use the $\P_{k+1}$-uniformly 
integrable martingale $f^{k}(\cdot,X)$ to define the measure $\P_{k}$ via 
$$
\frac{\de \P_{k}}{\de \P_{k+1}} \Big|_{\lijepof_{\cdot}} 
	= \frac{ F(\cdot, T_{k}, T_{k+1})}{ F(0,T_{k},T_{k+1})}
	= \frac{\e^{f^{k}(\cdot,X)}}{\e^{ f^{k}(0, X_{0})}}. 
$$ 
Then, we can compute the $\P_{k}$-characteristics of $X$ using Lemma  
\ref{SLgirsanov} and consider the `next' forward price process with dynamics
$$
F(\cdot,T_{k-1},T_{k}) = \e^{f^{k-1}(\cdot,X^{k-1})}.
$$
This procedure produces an arbitrage-free semimartingale model for the forward 
price process, and thus also for the \lib rate, if the \eqref{eq:drift} 
condition holds for each $k \in \bar{\mathcal{K}}$.

Finally, we can easily show that the measures $\P_k$ are indeed forward 
measures, i.e. that $B(\cdot,T_l)/B(\cdot,T_k)$ is a $\P_k$-martingale for all 
$1\le k,l\le N$. This follows directly from Proposition III.3.8 in 
\citet{JacodShiryaev03}, using that
\begin{align*}
\frac{B(\cdot,T_l)}{B(\cdot,T_k)} \frac{\dd \P_k}{\dd \P_{l+1}}
 = \e^{f^l(\cdot,X)}
\end{align*}
which is a $\P_{l+1}$-martingale. 
\end{proof}

\begin{remark}\label{rem:BI-fm}
Let us point out that, although the true martingale property of the forward 
price process is not necessary to guarantee the absence of arbitrage, it is 
required in order to define the forward measures and to construct the model via 
backward induction. Aside from this, forward measures play a crucial role in 
term structure models since they allow to derive tractable formulas for 
interest rate derivatives. Indeed, the major advantage of forward measures for 
derivative pricing is that we can avoid the numerical computation of 
multidimensional integrals over joint distributions.
\end{remark}

\begin{remark}\label{r:xk-x-bc}
We may assume, if desired, that $d \geq N-1$ in order to ensure there are at 
least as many driving factors as the number of forward price processes. 
Moreover, by suitable choices of the functions $f^k$ we may select the 
components of $X$ driving a certain forward price process. See, for example, 
Section \ref{ss:LMM} where we work with an $N-1$-dimensional process $X$ and 
set $f^k(x)=\hat f^k(x_k)$, for $x=(x_1, \ldots, x_{N-1})$ and $\hat f^k: \er 
\to \er$, i.e. each forward price process is driven by a different component of 
the process $X$.
\end{remark}

\subsection{Modeling rates under the terminal measure}
\label{sec:term-meas}

Another possibility for constructing a model for the forward \lib rates, or 
equivalently the forward price processes, is to start with the family of forward 
price processes with respect to the terminal bond price $B(\cdot, T_N)$ in the 
tenor structure, i.e. 
$$
F(\cdot,T_{k},T_{N}) = \frac{B(\cdot, T_k)}{B(\cdot, T_N)}
$$
for all $k \in \bar{\mathcal{K}}$, and to model them simultaneously under the 
same measure, typically the terminal forward measure $\P_N$. Similarly to the 
previous section, the construction is based on specifying exponential 
semimartingale dynamics for the forward price process of the following 
functional form
\begin{align}
\label{terminal-forward-price}
F(\cdot,T_{k},T_{N}) = \e^{g^k(\cdot, X^k)},
\end{align}
where $g^k$ are suitable functions and $X^k$ are $d$-dimensional 
semimartingales, for $k \in \bar{\mathcal{K}}$.

Consider a collection of $\erd$-valued semimartingales $X^{k}=(X^{k}_{t})_{0\leq 
t \leq T_N}$ on $(\Omega,\lijepof,\mathbb{F},\P_N)$ and a collection of 
functions $g^{k}: [0,T_N] \times \erd \to \er$ for all $k \in 
\bar{\mathcal{K}}$, which satisfy the following assumptions:
\begin{enumerate}[label=$(\mathbb{LIP'})$]
\item\label{lip1}  
  The function $g^{k}$ belongs to $C^{1,2}([0,T_N] \times \erd)$ and is 
  globally Lipschitz, i.e. 
  $$|g^{k}(t, x) - g^{k} (t, y)| \leq \tilde K^{k} |x-y|, $$ 
  for every $t\geq 0$ and any $x,y \in \erd$, where $\tilde  K^{k}>0$ is a 
  constant.
\end{enumerate}
\begin{enumerate}[label=$(\mathbb{INT'})$]
\item\label{int1} 
  The process $X^k$ is an $\erd$-valued semimartingale with absolutely 
  continuous characteristics $(b^{k,N},c^{k,N},F^{k,N})$ under $\P_N$, such 
  that the following conditions hold
  \begin{equation}
  \int_{0}^{T_N} \int_{\erd} \Big\{ |x|^{2} 1_{\{|x|\le1\}} 
    + |x| \e^{\tilde K^{k}|x|} 
    1_{\{|x|>1\}} \Big\} F^{k,N}_{t}(\de x) \dt < \tilde C_{1}^{k}
  \end{equation}
  and
  \begin{equation}
  \int_{0}^{T_N} \|c^{k,N}_{t} \| \de t < \tilde C_{2}^{k},
  \end{equation}
  for some constants $\tilde C_{1}^{k}, \tilde C_{2}^{k}>0$. Recall that the 
  truncation can be chosen the identity.
\end{enumerate}

\begin{theorem}\label{th:terminal-measure-model}
Consider $\erd$-valued semimartingales $X^{k}$ and functions $g^{k}$ such that 
Assumptions \ref{lip1} and \ref{int1} are satisfied for each $k \in 
\bar{\mathcal{K}}$. Assume that the forward price processes are modeled via 
\begin{align}\label{eq:tmm}
F(t,T_{k},T_{N}) = \e^{g^k(t, X_t^{k})}, \quad t\in[0,T_k],
\end{align}
and the following \emph{drift condition} is satisfied
\begin{align}\label{terminal-locmartcond}
\nonumber & \!\!\!\!\!\!
\la \De g^{k}(t, X^k_{t-}), b_t^{k,N} \ra \\
  &= -\gpoint^{k}(t, X^k_{t-}) -  \frac{1}{2} \sum_{i,j=1}^{d} \De_{ij}^{2} 
      g^{k}(t, X^k_{t-}) (c_{t}^{k,N})^{ij} \nonumber \\
  &\quad	- \frac{1}{2} \la \De g^{k}(t, X^k_{t-}), c_t^{k,N} 
    \De g^{k}(t, X^k_{t-})\ra  \tag{$\mathbb{DRIFT'}$}\\ \nonumber
  &\quad - \int_{\erd} \Big( \e^{g^{k}(t,X^k_{t-}+x) - g^{k}(t,X^k_{t-})}
   - 1 - \la \De g^{k}(t, X^k_{t-}), x\ra\Big) F^{k,N}_{t}(\de x),
\end{align}
for all $k \in \bar{\mathcal{K}}$. Then, the forward price processes are 
uniformly integrable martingales with respect to the terminal forward measure 
$\P_{N}$, for all $k \in \bar{\mathcal{K}}$. In particular, the model is 
arbitrage-free and satisfies Axioms \ref{a1} and \ref{a2}, as well as Property 
\ref{b1}.
\end{theorem}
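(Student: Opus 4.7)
The plan is to prove this result by applying the appendix propositions (Proposition \ref{proplocmart} and Proposition \ref{propUImart}) directly to each forward price process $F(\cdot,T_k,T_N)$ individually under the single measure $\P_N$. Unlike the backward induction proof of Theorem \ref{thm:back-ind}, no recursive change of measure is required here: since all forward price processes are modeled simultaneously under $\P_N$ and each is driven by its own semimartingale $X^k$ whose $\P_N$-characteristics $(b^{k,N},c^{k,N},F^{k,N})$ are given by assumption, the analysis decouples across $k$.

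The first step is to fix $k \in \bar{\mathcal{K}}$ and verify the hypotheses of Proposition \ref{propUImart} for the pair $(g^k, X^k)$. The Lipschitz assumption \ref{lip1} combined with the integrability conditions in \ref{int1} (the exponential tail bound with exponent $\tilde K^k$ and the bound on $\|c^{k,N}\|$) are precisely what is needed, analogously to the verification of \eqref{firstUI} and \eqref{secondUI} in the general step of the proof of Theorem \ref{thm:back-ind} but here with the constants $\tilde C_1^k$, $\tilde C_2^k$. Consequently the process $g^k(\cdot, X^k)$ is exponentially special under $\P_N$, and its exponential $F(\cdot,T_k,T_N) = \e^{g^k(\cdot, X^k)}$ is a $\P_N$-local martingale if and only if the drift condition from Proposition \ref{proplocmart} holds. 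Writing this condition out explicitly using $(b^{k,N},c^{k,N},F^{k,N})$ yields exactly \eqref{terminal-locmartcond}. Proposition \ref{propUImart} then upgrades the local martingale property to that of a uniformly integrable $\P_N$-martingale.

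Since $k$ was arbitrary, property \ref{b1} follows immediately: every forward price $B(\cdot,T_k)/B(\cdot,T_N) = F(\cdot,T_k,T_N)$ is a true $\P_N$-martingale. Axiom \ref{a2} follows a fortiori, since true martingales are local martingales. For axiom \ref{a1}, positivity of bond prices is built into the construction: $B(\cdot,T_N)$ is taken as numeraire (hence positive by convention), and $B(\cdot,T_k) = F(\cdot,T_k,T_N)\, B(\cdot,T_N) = \e^{g^k(\cdot, X^k)}\, B(\cdot,T_N) > 0$. Thus all three stated conclusions are delivered.

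I do not expect a serious obstacle here; the only mildly delicate point is confirming that the drift condition obtained from Proposition \ref{proplocmart} matches \eqref{terminal-locmartcond} term-by-term, which is a direct computation once one writes the canonical special semimartingale decomposition of $g^k(\cdot, X^k)$ via It\^o's formula and identifies the compensator of the jump term. The proof is effectively the ``first step'' of Theorem \ref{thm:back-ind} repeated in parallel for each $k$, with the structural simplification that no measure change is needed since each $F(\cdot,T_k,T_N)$ is required to be a martingale under the same reference measure $\P_N$.
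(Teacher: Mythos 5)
Your proposal is correct and follows essentially the same route as the paper: verify \ref{lip1} and \ref{int1} so that Proposition \ref{propUImart} gives that $g^k(\cdot,X^k)$ is exponentially special, invoke Proposition \ref{proplocmart} with the \eqref{terminal-locmartcond} condition to obtain the $\P_N$-local martingale property, and then use Proposition \ref{propUImart} again to upgrade to a uniformly integrable $\P_N$-martingale, with no measure changes needed since all processes live under $\P_N$. The only cosmetic difference is that you spell out the verification of \eqref{firstUI}--\eqref{secondUI} and the positivity/arbitrage conclusions in more detail than the paper's brief proof does.
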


\begin{proof}
The proof is simpler compared to the proof of Theorem \ref{thm:back-ind} 
because we work only under the terminal measure $\P_N$. Furthermore, we can 
work simultaneously with all forward price processes $F(\cdot, T_k, T_N)$ for 
each $k \in \bar{\mathcal{K}}$. More precisely, for all $k \in 
\bar{\mathcal{K}}$ the function $g^{k}$ satisfies \ref{lip1} and the process 
$X^k$ satisfies \ref{int1}, hence the process $g^{k}(\cdot,X^k)$ is an 
exponentially special \smmg by Proposition \ref{propUImart}. Using Proposition 
\ref{proplocmart}, we get by virtue of the \eqref{terminal-locmartcond} 
condition that $F(\cdot,T_{k},T_N)$ is a $\P_N$-local martingale. Moreover, 
Proposition \ref{propUImart} yields that $F(\cdot,T_{k},T_N)$ is actually a 
$\P_N$-uniformly integrable martingale.
\end{proof}

\begin{remark}\label{r:xk-x}
In this construction we can use a family of semimartingales $X^k$, $k \in 
\bar{\mathcal{K}}$, where each forward price process is driven by a different 
semimartingale. This is possible because we do not have to perform measure 
changes as we did in the backward construction, since all forward price 
processes are modeled under a common measure. Hence, at this stage, we do not 
need to know the dependence structure between the processes $X^k$ which is 
necessary when applying Girsanov's theorem. However, for pricing purposes and 
also for linking the backward and the terminal measure constructions, we revert 
to a common $\erd$-valued driving process $X$ for which the dependence 
structure between its components is obviously fully known. Naturally, the 
dimension of the process $X$ can be chosen such that each rate is driven by a 
different component of the process; compare with Remark \ref{r:xk-x-bc}. 
\end{remark}

\begin{remark}
Based on \eqref{terminal-forward-price}, we can immediately deduce the dynamics 
of the forward price process $F(\cdot, T_k, T_{k+1})$ and the forward \lib rate 
$L(\cdot, T_k)$, for all $k \in \bar{\mathcal{K}}$. Using that
$$
1+\delta L(\cdot,T_k)
  = F(\cdot, T_k, T_{k+1}) 
  = \frac{F(\cdot, T_k, T_{N})}{F(\cdot, T_{k+1}, T_{N})},
$$
we obtain that 
\begin{align}\label{-forward-price}
1+\delta L(\cdot,T_k)
  = F(\cdot,T_{k},T_{k+1}) 
  = \e^{g^k(\cdot, X^k) - g^{k+1}(\cdot, X^{k+1})}.
\end{align}
\end{remark}

\begin{remark}
Assumptions \ref{lip1} and \ref{int1} are sufficient to produce an 
arbitrage-free family of \lib rates, but they are by no means necessary. 
Indeed, we can weaken them slightly by assuming that the functions $g^k$ 
satisfy \ref{lip1} and the processes $X^k$ have finite exponential moments. Then 
the previous theorem yields an arbitrage-free model that satisfies Axioms 
\ref{a1} and \ref{a2}, but not necessarily \ref{b1}. However, as pointed out 
also in Remark \ref{rem:BI-fm}, the latter is needed to define forward measures 
which are very useful because they typically lead to tractable pricing 
formulas. 
\end{remark}

\begin{remark}\label{r:connection}
Let us consider the case where all semimartigales $X^k$ coincide, i.e. $X^k 
\equiv X$ for all $k \in \bar{\mathcal{K}}$. Then, we can easily link the 
approach using backward induction presented in subsection \ref{sec:back-ind} 
and the approach under the terminal measure presented in this subsection. More 
precisely, starting from a family of functions $g^k$, $k \in 
\bar{\mathcal{K}}$, and a semimartingale $X$ satisfying \ref{lip1}, \ref{int1} 
and \eqref{terminal-locmartcond}, we define 
\begin{align}\label{eq:f-sp}
f^{k}(t, x):= g^{k}(t, x) - g^{k+1}(t, x).
\end{align}
The functions $f^k$ obviously satisfy \ref{lip} with the constants $K^k:= \tilde 
K^{k} + \tilde K^{k+1}$. Assume moreover that the semimartingale $X$ satisfies 
\ref{int} with $K^k$ as above. Then the model for the terminal forward prices 
given by \eqref{terminal-forward-price} can be equivalently written as 
$$
F(\cdot,T_{k},T_{k+1}) 
  = \e^{f^k(\cdot, X)},  \qquad k \in \bar{\mathcal{K}},
$$
with $f^k$ given by \eqref{eq:f-sp} and all assertions of Theorem 
\ref{thm:back-ind} remain valid. 

Conversely, assuming that a model for the forward prices \eqref{eq:def-fpm-smmg} 
is given via a family of functions $f^k$, $k \in \bar{\mathcal{K}}$, and a 
semimartingale $X$ satisfying \ref{lip}, \ref{int} and \eqref{eq:drift}, we 
define 
\begin{align}\label{eq:g-sp}
g^{k}(t, x):= \sum_{j=k}^{N-1} f^{j}(t, x).
\end{align}
The functions $g^k$ satisfy condition \ref{lip} with the constants $\tilde 
K^k:= \sum_{j=k}^{N-1} K^{j}$. Assuming furthermore that the semimartingale $X$ 
satisfies \ref{int1} with $\tilde K^k$ as above, the model for the forward 
prices \eqref{eq:def-fpm-smmg} can be equivalently written as 
$$
F(\cdot,T_{k},T_{N}) 
  = \e^{g^k(\cdot, X)}, \qquad k \in \bar{\mathcal{K}},
$$
with $g^k$ defined in \eqref{eq:g-sp}. This easily follows from the following 
telescopic product 
\begin{align}\label{eq:telescopic}
F(\cdot,T_{k},T_{N}) 
  = \frac{B(\cdot, T_k)}{B(\cdot, T_N)} 
  = \prod_{j=k}^{N-1} \frac{B(\cdot, T_{j})}{B(\cdot, T_{j+1})} 
  = \prod_{j=k}^{N-1} F(\cdot,T_{j},T_{j+1}).
\end{align}
Thus, we conclude that Theorem \ref{th:terminal-measure-model} is valid for the 
semimartingale $X$ and the functions $g^k$, $k \in \bar{\mathcal{K}}$.
\end{remark}

\subsection{Observations and ramifications}

Next, we discuss further properties of the models constructed in the previous 
two subsections. In particular, we derive conditions such that a \lib model is 
structure preserving and produces non-negative rates. In order to provide a 
unified treatment of both modeling approaches, we assume that $X^k\equiv X$ in 
subsection \ref{sec:term-meas}, for all $k \in \bar{\mathcal K}$.

\begin{lemma}
\emph{(i)} If the functions $f^k$  are non-negative for all $k \in 
\bar{\mathcal{K}}$, then the \lib rates in the model \eqref{eq:def-fpm-smmg} are 
non-negative, i.e. Property \ref{c} is satisfied. \par
\emph{(ii)} If the functions $g^k$ are non-negative and such that $g^k(t, x)  
\geq g^{k+1}(t, x)$ for all $k \in \bar{\mathcal{K}}$ and all $(t, x) \in  [0, 
T_N] \times \erd$, then the \lib rates in the model \eqref{eq:tmm} are 
non-negative, i.e. Property \ref{c} is satisfied. 
\end{lemma}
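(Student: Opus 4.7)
The plan is to reduce both parts to the scalar inequality $F(\cdot,T_k,T_{k+1})\ge 1$. Indeed, relation \eqref{eq:connection-Libor-forward-price} gives $L(t,T_k)=\delta_k^{-1}\bigl(F(t,T_k,T_{k+1})-1\bigr)$ with $\delta_k>0$, so Property \ref{c} is equivalent to the exponent appearing in $F(\cdot,T_k,T_{k+1})$ being non-negative for every $k\in\bar{\mathcal{K}}$.

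For (i), the dynamics \eqref{eq:def-fpm-smmg} directly yield $F(t,T_k,T_{k+1})=\e^{f^k(t,X_t)}$, so the pointwise non-negativity of $f^k$ immediately gives $F(t,T_k,T_{k+1})\ge 1$, hence $L(t,T_k)\ge 0$. This part is a one-line computation.

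For (ii), under the standing assumption $X^k\equiv X$ made at the beginning of the subsection, the formula \eqref{-forward-price} gives $F(t,T_k,T_{k+1})=\e^{g^k(t,X_t)-g^{k+1}(t,X_t)}$ for $k\in\{1,\ldots,N-2\}$, so the monotonicity hypothesis $g^k\ge g^{k+1}$ makes the exponent non-negative and therefore $L(t,T_k)\ge 0$. The boundary case $k=N-1$ must be treated separately, since the telescoping identity in \eqref{eq:telescopic} does not produce a $g^N$: here I would fall back on the defining relation \eqref{terminal-forward-price} itself, which reads $F(\cdot,T_{N-1},T_N)=\e^{g^{N-1}(\cdot,X)}$, so the pointwise non-negativity of $g^{N-1}$ yields $F(\cdot,T_{N-1},T_N)\ge 1$ and hence $L(\cdot,T_{N-1})\ge 0$.

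There is essentially no technical obstacle: the lemma is a sign-check on the exponent of the forward price process, and both assumption packages are calibrated to make that exponent non-negative. The only mild subtlety is the need to treat $k=N-1$ separately in (ii), which also clarifies why the statement requires \emph{both} $g^k\ge 0$ \emph{and} the monotonicity $g^k\ge g^{k+1}$: the non-negativity is used at the last tenor date, while monotonicity handles the earlier ones via the quotient \eqref{-forward-price}.
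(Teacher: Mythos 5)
Your proof is correct and follows essentially the same route as the paper, which simply invokes the relations \eqref{eq:connection-Libor-forward-price} and \eqref{-forward-price} to reduce Property \ref{c} to the sign of the exponent of the forward price. Your additional remark that for $k=N-1$ one falls back on \eqref{terminal-forward-price} (equivalently, the convention $g^N\equiv 0$), so that non-negativity of $g^{N-1}$ covers the last tenor date while monotonicity covers the rest, is a correct refinement of the same argument rather than a different approach.
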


\begin{proof}
This follows directly from the relation between forward prices and \lib rates, 
see \eqref{eq:connection-Libor-forward-price} and \eqref{-forward-price}.
\end{proof}

The second tractability property \ref{b2} states that a LIBOR model is 
\textit{structure preserving} if the characteristics of the driving process are 
transformed in a deterministic way under different forward measures, which 
ensures that the driving processes remain in the same class under all forward 
measures. In order to formalize the statement, we consider the following 
assumption. 

\begin{enumerate}[label=$(\mathbb{E})$]
\item\label{e}
Let $U:=\erd$ (respectively $U:=\er_+^d$). The measure 
$\P_N^{X_{t-}}$ is absolutely continuous with positive Lebesgue density on $U$, 
for all $t\in[0,T_N]$.
\end{enumerate}

\noindent We say that a LIBOR model is structure preserving if the tuple 
$(\beta^l,Y^l)$ defining the change of measure from the forward measure $\P_l$ 
to $\P_{l-1}$, for $l=N,\ldots, 1$, via Girsanov's theorem as in Lemma 
\ref{SLgirsanov}, is deterministic.

Notice that under assumption \ref{e}, $\beta^l$ is deterministic if and only if 
$Y^l$ is so; indeed, if $Y^{l}(t,x) = \e^{f^{l} (t, X_{t-}+x) - f^{l} (t, 
X_{t-})} $ is assumed to be deterministic, the function $f^l$ must satisfy
\begin{align}
f^l(t,y+x) - f^l(t,y) = h^l(t,x)
\end{align}
for every $x,y\in U$, for some function $h^l$. Taking derivatives with respect 
to $y$, we get that
\[ \De f^l(t,y+x)= \De f^l(t,y) \]
for every $x,y\in U$, hence $\De f^l(y)$ is constant, and thus 
$f^l(t,\cdot)|_U$ is an affine function. This implies that $\beta^l_t=\De 
f^l(t,X_{t-})$ is deterministic.

Conversely, assume that the variable $\beta^l_t=\De f^{l} (t, X_{t-})$ is 
deterministic. Since the support of $\P_N^{X_{t-}}$ is $U$, $\De f^{l}$ is 
continuous and $\P_N^{X_{t-}}$ has a positive Lebesgue measure on $U$, we 
obtain that $\De f^{l}$ is constant and hence $f^{l}$ is affine in the 
second variable. Thus, we conclude that $Y^l(t,x)$ is deterministic.

The next result provides necessary and sufficient conditions for \ref{b2} to be 
satisfied.

\begin{proposition} \label{p:structure-preserve}
If the functions $f^k$ and $g^k$ are affine in the second variable for every $k 
\in \bar{\mathcal{K}}$, then the \lib models in \eqref{eq:def-fpm-smmg} and 
\eqref{eq:tmm} are structure preserving. Conversely, assume \ref{e}. If the \lib 
models in \eqref{eq:def-fpm-smmg} and \eqref{eq:tmm} are structure preserving, 
then the functions $f^k|_{[0,T_N]\times U}$ and $g^k|_{[0,T_N]\times U}$ are 
affine in the second variable for every $k \in \bar{\mathcal{K}}$.
\end{proposition}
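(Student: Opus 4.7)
The plan is to split the proposition into its two natural directions and, in each case, to reduce the terminal-measure model \eqref{eq:tmm} to the backward-induction model \eqref{eq:def-fpm-smmg} via the correspondence of Remark \ref{r:connection} together with the telescopic identity \eqref{eq:telescopic}. Most of the analytic work for \eqref{eq:def-fpm-smmg} is already carried out in the paragraph immediately preceding the proposition, so the bulk of what remains is transporting that content to the $g^k$-parameterisation of \eqref{eq:tmm}.

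For the sufficiency direction I would proceed by direct substitution. Writing $f^k(t,x) = \alpha^k(t) + \langle \gamma^k(t), x\rangle$, the Girsanov data attached to $\de\P_k/\de\P_{k+1} = \e^{f^k(\cdot,X) - f^k(0,X_0)}$ through Lemma \ref{SLgirsanov} reduce to a deterministic drift kernel equal to $\gamma^k(t)$ and a deterministic jump kernel $(t,x)\mapsto\e^{\langle \gamma^k(t), x\rangle}$; hence \eqref{eq:def-fpm-smmg} is structure preserving. If instead the $g^k$ are affine in $x$, then $f^k := g^k - g^{k+1}$ is also affine, and Remark \ref{r:connection} ensures that the same computation applies to \eqref{eq:tmm}.

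For the necessity direction (assuming \ref{e}) the backward-induction case is essentially a citation: the paragraph before the proposition shows that determinism of $\De f^k(t, X_{t-})$, combined with continuity of $\De f^k$ (from $f^k \in C^{1,2}$) and positivity of the Lebesgue density of $\P_N^{X_{t-}}$ on $U$, forces $\De f^k(t,\cdot)|_U$ to be constant, whence $f^k|_{[0,T_N]\times U}$ is affine in the second variable. For \eqref{eq:tmm}, I would observe that the Girsanov kernel governing each successive forward-measure change is still controlled by the function $f^k := g^k - g^{k+1}$, since $F(\cdot,T_k,T_{k+1}) = \e^{(g^k - g^{k+1})(\cdot, X)}$. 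The backward-case conclusion therefore yields that each $g^k - g^{k+1}$ is affine on $U$. With the natural convention $g^N \equiv 0$, forced by $F(\cdot,T_N,T_N) \equiv 1$, the telescoping
\begin{equation*}
g^k = \sum_{j=k}^{N-1} (g^j - g^{j+1})
\end{equation*}
expresses $g^k|_{[0,T_N]\times U}$ as a finite sum of affine functions, hence itself affine in the second variable.

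The hard part will be the passage, within the necessity direction, from determinism of the random variable $\De f^k(t, X_{t-})$ to genuine pointwise constancy of the function $\De f^k(t,\cdot)$ on all of $U$; a priori one has constancy only on a set of full $\P_N^{X_{t-}}$-measure. The bridge is the combination of continuity of $\De f^k$ (granted by the $C^{1,2}$-regularity in \ref{lip}/\ref{lip1}) and the positive Lebesgue density in \ref{e}; both are already invoked in the discussion preceding the proposition, so once this step is in hand, what remains is the straightforward telescopic bookkeeping relating $f^k$ and $g^k$.
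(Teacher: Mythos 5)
Your proposal is correct and takes essentially the same route as the paper: sufficiency by substituting the affine form into the Girsanov data (equivalently, the transformed characteristics), and necessity by invoking the continuity-plus-positive-Lebesgue-density argument in the discussion immediately preceding the proposition. Your explicit telescoping $g^k=\sum_{j=k}^{N-1}(g^j-g^{j+1})$ with the convention $g^N\equiv 0$ merely spells out the terminal-measure case, which the paper handles with ``the other one follows analogously'' and the correspondence $f^k=g^k-g^{k+1}$ of Remark \ref{r:connection}.
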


\begin{proof}
We will concentrate on the model constructed by backward induction, while the 
other one follows analogously. Following the argumentation in the proof of 
Theorem \ref{thm:back-ind}, the characteristics of $X$ under $\P_l$ have the 
following form
\begin{align}\label{kl-characteristics}
b_t^{l} &= b^{N}_{t} + c^{N}_{t} \sum_{j=l+1}^{N-1} 
  \De f^{j} (t, X_{t-}) \nonumber\\\nonumber
&\quad + \int_{\erd} \left( \prod_{j=l+1}^{N-1} \e^{f^{j} (t, X_{t-}+x) - 
  f^{j} (t, X_{t-})} - 1 \right) x F^{N}_{t}(\de x)\\
c_t^{l} &= c_{t}^{N}\\ \nonumber
F_t^{l} (\de x) &= \prod_{j=l+1}^{N-1} \e^{f^{j} (t, X_{t-}+x) 
  - f^{j} (t, X_{t-})} F_{t}^{N} (\de x).  
\end{align}
Assume that the function $f^k(t,x)$ is affine in $x$, i.e. there exist 
$\alpha^k(t)$ and $\beta^k(t)$ such that $f^k(t,x)=\alpha^k(t) + \la 
\beta^k(t),x\ra$, then we can easily deduce that $(b^{l},c^{l},F^{l})$ in 
\eqref{kl-characteristics} is only a deterministic transformation of 
$(b^{N},c^{N},F^{N})$. 

The converse statement is already implied by the arguments preceding this 
Proposition. 
\end{proof}

The statement of Proposition \ref{p:structure-preserve} can be generalized to 
allow for more general driving processes. Assumption \ref{e}, for instance, can 
be formulated for more general sets $U$. As an example, $X$ could be a process 
that is positive in some coordinate and real or negative in another. On the 
other hand, processes with fixed jump sizes, such as the Poisson process, 
require a slightly different approach than in the proof above, taking care of 
the state space of the process and the support of the jump measure.

The following remark summarizes further interesting properties of \lib rates 
that can be easily deduced from this general modeling framework.

\begin{remark}\label{rem-affine-boudedfrombelow}
If the function $f^k$ is affine in the second argument, i.e.
\begin{equation}\label{affinefk}
f^k(t,x) = \alpha^k(t) + \la\beta^k(t),x\ra \,,
\end{equation}
with functions $\alpha^k,\beta^k\in C^1(\er_+)$ and the process is required to 
satisfy
\[F(\cdot,T_k,T_{k+1})\ge 1,\]
i.e.\ produce non-negative \lib rates, then the process $X$ has to be bounded 
from below.
\end{remark}
\section{Examples} 
\label{sec:4}

\subsection{\lib market models}
\label{ss:LMM}

We start by revisiting the class of \lib market models in view of the general 
framework developed in the previous section. We will concentrate on the L\'evy 
\lib model of \citet{EberleinOezkan05} in order to fix ideas and processes, and 
as a representative of other LIBOR market models which fit in this framework as 
well, such as models with local volatility, stochastic volatility or driven by 
jump-diffusions. See, among many other references, \citet{BrigoMercurio06}, 
\citet{Schoenmakers05}, \citet{Glasserman03} and 
\citet{Andersen_Piterbarg_2010}.

We assume that the driving process $X$ is an $\er^{N-1}$-valued semimartingale 
of the form 
\begin{align}
\label{eq:levy}
X & = B + \Lambda \cdot L,
\end{align}
where $L$ is an $\er^n$-valued time-inhomogeneous L\'evy process with 
characteristic triplet $(0, c^L, F^L)$ under the terminal measure $\P_{N}$ with 
respect to the truncation function $h(x)=x$, and $\Lambda = [\lambda(\cdot,T_1), 
\ldots, \lambda(\cdot,T_{N-1}) ]$ is an $(N-1)\times n$ volatility matrix where, 
for every $k\in\bar{\mathcal{K}}$, $\lambda(\cdot, T_k)$ is a deterministic, 
$n$-dimensional function. Moreover, $\Lambda \cdot L$ denotes the It\^{o} 
stochastic integral of $\Lambda$ with respect to $L$, while the drift term $B = 
\int_0^\cdot b(s)\ds = \left(\int_0^\cdot b(s,T_1) \ds,\ldots,\int_0^\cdot 
b(s,T_{N-1})\ds\right)$ is an $(N-1)$-dimensional stochastic process. We further 
assume that the following exponential moment condition is satisfied: 
\begin{enumerate}[label=$(\mathbb{EM})$] 
\item\label{EM-fpm}
  Let $\varepsilon>0$ and $M>0$, then
  \begin{equation*}
  \int_0^{T_N}\int_{|x|>1}\e^{\la u, x \ra} F^L_s(\dx)\ds <\infty 
  \qquad\text{for all $ u\in[-(1+\varepsilon) M, (1+\varepsilon) M]^n$};
  \end{equation*}
\end{enumerate}
while the volatility functions satisfy:
\begin{enumerate}[label=$(\mathbb{VOL})$] 
\item\label{vol} 
  The volatility $\lambda(\cdot,T_k):\,[0,T_N]\to\er^n_+$  is a deterministic, 
  bounded function such that for  $s>T_k$, $\lambda(s,T_k) =0$, for every $k 
  \in \bar{\mathcal{K}}$. Moreover, 
  \begin{equation}\label{eq:fpm.volbound}
  \sum_{k=1}^{N}  \lambda^j(s,T_k) \leq M \qquad\qquad\text{for all }
  s\in[0,T_N],
  \end{equation}
  for every   $s \in [0, T_N]$ and every coordinate $j \in \{1, \ldots, n\}$.
\end{enumerate}

The construction of the \lev \lib model will follow the backward induction 
approach of subsection \ref{sec:back-ind}. Define, for all 
$k\in\bar{\mathcal{K}}$ and $x=(x_1, \ldots, x_{N-1}) \in \er^{N-1}$, the 
functions 
\begin{align}
f^k(t,x) &:= \log\big(1+\delta_k L(0,T_k)\e^{x_k}\big)
\end{align}
and set
\begin{align}
F(t, T_k, T_{k+1}) &= \e^{f^k(t,X_t)}, \qquad k\in\bar{\mathcal{K}}.
\end{align}
Then, it follows easily that
\begin{align*}
F(t, T_k, T_{k+1}) 
& = 1+\delta_k L(0,T_k)\e^{X^k_t},
\end{align*}
which coincides with the dynamics of the L\'evy \lib model of Eberlein and \"Ozkan
\citeyearpar{EberleinOezkan05}, that are provided by 
\begin{align}
L(t, T_k) & = L(0,T_k) \exp \left( \int_0^t b(s, T_k) \ds + \int_0^t \lambda(s, 
T_k) \dd L_s   \right).
\end{align}

The function $f^k$ is Lipschitz continuous with constant~$1$, hence condition 
\ref{lip} is satisfied with $K^k=1$ for each $k\in\bar{\mathcal K}$. Moreover, 
thanks to assumptions \ref{EM-fpm} and \ref{vol}, condition \ref{int} is also 
satisfied for every $k \in \bar{\mathcal K}$. Therefore, an application of 
Theorem \ref{thm:back-ind} yields the drift $B^k=\int_0^\cdot b(s,T_k)\ds$ of 
this model under the terminal measure. More precisely, we have that
$$
\partial_{x_k}  f^k(t,x)
 = \frac{\delta_k L(0,T_k) \e^{x_k}}{1+\delta_k L(0,T_k) \e^{x_k}} 
 =: \ell^k(x_k),
$$
and $\partial_{x_j}  f^k(t,x) = 0$, for $j \neq k$, while also $\frac{\mathrm 
d}{\dt}f^k(t,x)=0$. Moreover, 
$$
\partial_{x_k x_k}  f^k(t,x) 
  = \frac{\delta_k L(0,T_k)\e^{x_k}}{(1+\delta_k L(0,T_k)\e^{x_k})^2},
$$
and $\partial_{x_i x_j}  f^k(t,x) = 0$, for all $(i, j) \neq (k, k)$. According 
to Proposition 2.4 in \citet{Kallsen06},  the $\P_N$-characteri\-stics $(b^N, 
c^N, F^N)$ of $X$ are given by
\begin{align}
\label{eq:char-X-LMM}
\notag b_t^N & = b(t) \\
c_t^N & = \big\langle\Lambda(t), c^L_t \Lambda(t)\big\rangle \\
\notag F_t^N(A) & = \int_{\er^n} \indik_{A} (\Lambda(t) x) F^L_t(\dx), \qquad A 
\in \mathcal{B}(\er^{N-1}) \setminus \{ 0\},
\end{align}
hence the \eqref{eq:drift} condition from Theorem \ref{thm:back-ind} becomes 
\begin{align}
\label{eq:LMM-drift}
\ell^k(X_{t-}^k) b(t, T_k) \notag & = -\frac{1}{2} \frac{\ell^k(X^k_{t-})}{1+\delta_k 
L(0,T_k)\e^{X^k_{t-}}} \langle \lambda(t, T_k), c^L_t \lambda(t, T_k)\rangle \\
\notag &   -\frac{1}{2} (\ell^k(X^k_{t-}))^2 \langle \lambda(t, T_k), c^L_t 
\lambda(t, T_k)\rangle \\
 &  - \sum_{j=k+1}^{N-1} \ell^k(X^k_{t-}) \ell^j(X^j_{t-}) \langle 
\lambda(t, T_k), c^L_t \lambda(t, T_j)\rangle \\
\notag & - \int_{\er^{N-1}} \Big[ \left(\e^{f^k(X^k_{t-}+x_k) -  f^k(X^k_{t-})} 
-1 \right) \prod_{j=k+1}^{N-1}\left(\e^{f^j(X^j_{t-}+x_j) -  f^j(X^k_{t-})} 
\right) \\
\notag & \qquad \qquad - \ell^k(X^k_{t-} x_k) \Big] F_t^X(\dx)
\end{align}
Notice that
\begin{align*}
\frac{\ell^k(X^k_{t-})}{1+\delta_k L(0,T_k)\e^{X^k_{t-}}}  + 
(\ell^k(X^k_{t-}))^2 & = \frac{\delta_k L(0,T_k)\e^{X^k_{t-}} + (\delta_k 
L(0,T_k)\e^{X^k_{t-}})^2}{(1+\delta_k L(0,T_k)\e^{X^k_{t-}})^2} \\
& = \frac{\delta_k L(0,T_k)\e^{X^k_{t-}} }{1+\delta_k L(0,T_k)\e^{X^k_{t-}}} = 
\ell(X^k_{t-})
\end{align*}
and that, for all $j=k, \ldots, N-1$,
\begin{align}
\notag \e^{f^j(X^j_{t-}+x_j) -  f^j(X^k_{t-})} 
& = \frac{1+\delta_j L(0,T_j)\e^{X^j_{t-} +x_j}}{1+\delta_j 
L(0,T_j)\e^{X^j_{t-}}} \\
\notag & = \frac{1+\delta_j L(0,T_j)\e^{X^j_{t-}} + \delta_j 
L(0,T_j)\e^{X^j_{t-}} (\e^{x_j}-1)}{1+\delta_j L(0,T_j)\e^{X^j_{t-}}} \\
\notag & = 1 + \ell^j(X^j_{t-}) (\e^{x_j} -1).
\end{align}
Inserting the above simplifications into \eqref{eq:LMM-drift} yields
\begin{align}
\label{eq:random-terms-LMM}
b(t, T_k) & = -\frac{1}{2} \langle \lambda(t, T_k), c^L_t \lambda(t, 
T_k)\rangle 
-  \sum_{j=k+1}^{N-1}\ell^j(X^j_{t-}) \langle \lambda(t, T_k), c^L_t \lambda(t, 
T_j)\rangle \notag \\ \notag
&\quad  - \int_{\er^N} \Big[  (\e^{x_k} -1)  \prod_{j=k+1}^{N-1} \left(1 + 
\ell^j(X^j_{t-}) (\e^{x_j} -1) \right)
 - x_k \Big] F_t^X(\dx) \\ \notag 
 & = -\frac{1}{2} \langle \lambda(t, T_k), c^L_t \lambda(t, T_k)\rangle -  
\sum_{j=k+1}^{N-1}\ell^j(X^j_{t-}) \langle \lambda(t, T_k), c^L_t \lambda(t, 
T_j)\rangle \\
 &\quad  - \int_{\er^n} \Big[  (\e^{\langle \lambda(t, T_k), y \rangle } -1)  
\prod_{j=k+1}^{N-1} \left(1 + \ell^j(X^j_{t-}) (\e^{\langle \lambda(t, T_j), y 
\rangle} -1) \right) \\\notag
 & \qquad \qquad 
 - \langle \lambda(t, T_k), y \rangle \Big] F^L_t(\dy),
\end{align}
where the second equality follows by \eqref{eq:char-X-LMM}. The equation above 
now can be recognized as the drift condition of the \lev \lib model; cf. 
Papapantoleon, Schoenmakers, and Skovmand\citeyearpar[eq.~(2.7)]{PapapantoleonSchoenmakersSkovmand10}.

\begin{remark}\label{rem-LevyLibor-satisfiesproperties}
The \lib market models satisfy Axioms \ref{a1} and \ref{a2}, as well as 
Properties \ref{b1} and \ref{b4} by construction. On the other hand, Properties 
\ref{b2} and \ref{b3} are not satisfied. Regarding \ref{b2}, this follows 
immediately by Proposition \ref{p:structure-preserve} (at least for driving 
processes satisfying \ref{e}, which is typically the case), since the 
functions $f^k$, $k \in \bar{\mathcal K}$, are not affine in the second 
argument. Moreover, the drift term \eqref{eq:random-terms-LMM} which contains 
the random terms $\delta_jL(t,T_j)/(1+\delta_jL(t,T_j))$ implies that the vector 
of \lib rates $(L(\cdot, T_k))_{k \in \bar{\mathcal K}}$ considered as a whole 
is Markovian, but not the single \lib rates, because their dynamics depend on 
the other rates as well. Hence, \ref{b3}  does not hold. Finally, Property 
\ref{c} is obviously satisfied in this model.
\end{remark}

\subsection{L\'evy forward price models} 
\label{ss:FPM}

Next, we show that the L\'evy forward price models can be easily embedded in our 
general framework starting from the terminal measure construction; starting from 
the backward induction approach is even easier. The L\'evy forward price models 
were introduced by \citet[pp.~342-343]{EberleinOezkan05}; see also 
\citet{Kluge05} for a detailed construction and \citet{KlugePapapantoleon06} for 
a concise presentation.

We will model the dynamics of the forward price relative to the terminal bond 
price under the terminal measure $\P_N$, via
\begin{align}\label{eq:fpm-tm}
F(t,T_k,T_N) = \e^{g^k(t,X^k_t)},
\end{align}
where the function $g^k$ is of the following affine form
\begin{align}\label{eq:fpm-g}
g^k(t,x) := \log F(0,T_k,T_N) + x,
\end{align}
while the process $X^k$ has the following dynamics
\begin{align}\label{eq:fpm-x}
X^k := \int_0^\cdot b_s^{k,N} \ds 
      + \sum_{i=k}^{N-1} \int_0^\cdot \lambda(s,T_i) \dd L_s. 
\end{align}
The driving process $L$ and the volatility functions $\lambda(\cdot,T_i)$ are 
specified, while the drift term $b^{k,N}$ is determined by the no-arbitrage 
\eqref{terminal-locmartcond} condition. In particular, $L$ is an $\er^n$-valued 
time-inhomogeneous \lev process with $\P_N$-local characteristics $(0,c^L,F^L)$ 
satisfying condition \ref{EM-fpm} and the volatility functions satisfy condition 
\ref{vol}. The function $g^k$ trivially satisfies the \ref{lip1} condition with 
constant 1, while the process $X^k$ satisfies the \ref{int1} condition by virtue 
of \ref{EM-fpm} and \ref{vol}; see also 
\citet[Remark~3.7]{Criens_Glau_Grbac_2015}. Therefore, we can apply Theorem 
\ref{th:terminal-measure-model} and, after some computations, the 
\eqref{terminal-locmartcond} condition yields that
\begin{align}\label{eq:fpm-b}
b_t^{k,N} = -\frac12  c_t^{k,N} - \int_\er (\e^x-1-x) F_t^{k,N}(\dx).
\end{align}
Moreover, using \citet[Lemma~3]{KallsenShiryaev02b}, the $\P_N$-local 
characteristics of the stochastic integral process $X^k$ are 
\begin{align}\label{eq:fpm-c}
c_t^{k,N} 
  = \left\la \sum_{i=k}^{N-1} \lambda(t,T_i),
    c^L_t \sum_{i=k}^{N-1} \lambda(t,T_i) \right\ra
\end{align}
and 
\begin{align}\label{eq:fpm-f}
F_t^{k,N}(A) 
  = \int_{\er^n}\indik_A\left(  \sum_{i=k}^{N-1}  \left\la\lambda(t,T_i),  x  
  \right\ra \right) F^L_t(\dx), \quad A\in\mathcal{B}(\er).
\end{align}

Now, using \eqref{eq:fpm-tm}--\eqref{eq:fpm-x}, we get that the dynamics of the 
forward price process $F(\cdot,T_k,T_{k+1})$ are provided by
\begin{align*}
F(t,T_k,T_{k+1})
  &= \frac{F(t,T_k,T_N)}{F(t,T_{k+1},T_N)}
   = F(0,T_k,T_{k+1}) \e^{X^k_t - X^{k+1}_t} \nonumber \\
  &= F(0,T_k,T_{k+1}) \exp\left( \int_0^t \big(b_s^{k,N}-b_s^{k+1,N}\big) \ds 
  	+ \int_0^t \lambda(s,T_k) \dd L_s \right),
\end{align*}
hence the forward price process is driven by its corresponding volatility 
function and the time-inhomogeneous \lev process, as specified in the \lev 
forward process models. We just have to check that the drift terms coincide as 
well. Indeed, using  \eqref{eq:fpm-b}--\eqref{eq:fpm-f}, after some 
straightforward calculations we get that
\begin{align*}
b_s^{k,N}-b_s^{k+1,N}
 &= - \frac12 \left\la \lambda(s,T_k), c^L_s \lambda(s,T_k) \right\ra 
  - \sum_{i=k+1}^{N-1} \left\la \lambda(s,T_k), c^L_s \lambda(s,T_i) \right\ra  
\\
 &\, - \int_{\er^n} \left\{ \Big( \e^{\la \lambda(s,T_k), x\ra }-1 \Big) 
 \e^{ \sum_{i=k+1}^{N-1} \la\lambda(s,T_i), x\ra  } - \la \lambda(s,T_k), x\ra 
  \right\} F^L_s(\dx),
\end{align*}
which is exactly the $\P_N$-drift of the forward price process; compare with Kluge and Papapantoleon
\citeyearpar[eqs.~(19)--(21)]{KlugePapapantoleon06}.

\begin{remark}
The \lev forward price model satisfies Axioms \ref{a1} and \ref{a2} as well as 
Properties \ref{b1}  and \ref{b4} by construction. Moreover, it satisfies 
Properties  \ref{b2} and \ref{b3}; cf. Proposition \ref{p:structure-preserve}. 
Property \ref{c} is not satisfied however, i.e. the \lib rates can become 
negative; cf. Remark \ref{rem-affine-boudedfrombelow}. 
\end{remark}

\subsection{Affine \lib models} 
\label{sec:alm}

Finally, we examine a class of \lib models driven by affine processes, and in 
particular the affine \lib models proposed by  
\citet{KellerResselPapapantoleonTeichmann09}. Our main reference for the 
definition and properties of affine processes is 
\citet{DuffieFilipovicSchachermayer03}. 

Let $X=(X_t)_{0\leq t \leq T_N}$ be a conservative affine process according to 
Definitions 2.1 and 2.5 in \citet{DuffieFilipovicSchachermayer03} with state 
space $D=\er_+$. We consider a one-dimensional process here only for notational 
simplicity; the $d$-dimensional case can be treated in exactly the same manner. 
Moreover, the state space is restricted to the positive half-line following 
\citet{KellerResselPapapantoleonTeichmann09}, which is necessary in order to 
produce a model satisfying \ref{c}; see also Remark 
\ref{rem-affine-boudedfrombelow}. We can equally well choose the state space 
$D=\er$, and then interest rates in the model will also take negative values. 

The process $X$ is a semimartingale with absolutely continuous characteristics, 
and the local characteristics $(b^X,c^X,F^X)$ of $X$ with respect to the 
truncation function $h(x):=1\wedge x$, for $x \in D$, are given as
\begin{eqnarray*}
b^X_t &=& \tilde{b} + \beta X_{t-} \\
c^X_t &=& 2  \alpha X_{t-}  \\
F^X_t(\de \xi) &=& F^1(\de \xi) + X_{t_-}F^2(\de \xi)
\end{eqnarray*}
for some $\tilde{b}>0$, $ \beta \in\er$, $\alpha>0$ and L\'evy measures $F^1$ 
and $F^2$ on $D \setminus \{0\}$ (cf. Theorem 2.12 in 
\citet{DuffieFilipovicSchachermayer03}), with 
$$
\tilde b:= b + \int_{\xi>0} h(\xi) F^1(\de \xi)\,.
$$
Affine processes are characterized by the following property of their moment 
generating function:
\begin{align}\label{affine-def}
\E_\mathrm{x}\big[\exp (u X_t )\big]
 = \exp\big( \phi(t,u) +  \psi(t,u) \mathrm{x}   \big),
\end{align}
for all $(t,u,\mathrm{x}) \in [0,T_N] \times \I_T \times D$, where 
$\E_\mathrm{x}$ denotes the expectation with respect to $\P_\mathrm{x}$---a 
probability measure such that $X_0=\mathrm{x} \in D$, $\P_\mathrm{x}$-a.s. 
Moreover, the set $ \I_T $ is defined by 
 \begin{align}
\label{eq:I_T}
\mathcal{I}_T
:= \big\{ u\in\er: \E_\mathrm{x}\big[\e^{u X_{T_N}}\big] < \infty,
        \,\,\text{for all}\; \textrm{x} \in D \big\},
\end{align}
while $(\phi,\psi)$ is a pair of deterministic functions $\phi, \psi :[0, T_N] 
\times\I_T\to\er$. The functions $\phi$ and  $\psi$ are given as solutions to 
generalized Riccati equations (cf. Theorem 2.7 in  
\citet{DuffieFilipovicSchachermayer03}), that is
\begin{align}\label{eq:Riccati1}
\begin{split}
\partial_t  \phi(t,u) &= F \big(\psi(t,u) \big)\,,\quad \phi(0,u) = 0 \\
\partial_t \psi(t,u) &= R \big( \psi(t,u) \big)\,,\quad \psi(0,u) = u,
\end{split}
\end{align}
where
\begin{align}\label{eq:Riccati2}
\begin{split}
F(u) &= bu + \int_{\xi>0} \big(\e^{u\xi} - 1 \big) F^1(\de \xi), 
\\
R(u) &= \alpha u^2+ \beta u + \int_{\xi>0} \big( \e^{u\xi} - 1 - 
	uh(\xi) \big) F^2(\de \xi) \,.
\end{split}
\end{align}

We introduce next the class of \textit{affine forward price models}, where the 
forward price is an exponentially-affine function of the driving affine process 
$X$. In particular, we consider the setting of the terminal measure construction 
of subsection \ref{sec:term-meas} with 
\begin{align}\label{eq:afpm}
g^k(t,x) = \theta^k(t)  + \vartheta^k(t)x
\quad\text{ and }\quad
X^k \equiv X.
\end{align}
The next result shows that the functions $\theta^k, \vartheta^k$ are solutions 
to generalized Riccati equations themselves.

\begin{proposition}\label{prop:affine}
Let $X$ be an affine process with values in $D$, $X_0=1$ and satisfying 
\ref{int1}, and $g^k$, $k \in \bar{\mathcal K}$, be a collection of functions 
given by \eqref{eq:afpm} where $\theta^k, \vartheta^k: [0, T_N] \to \er$ are 
deterministic functions of class $C^1$. Then, the forward price process given 
by 
\begin{align}
\label{eq:affine-model-2}
F(t, T_k, T_N) = \e^{ \theta^k(t) + \vartheta^k(t) X_t },
\quad t\in[0,T_k],
\end{align}
is a uniformly integrable martingale, for all $k \in \bar{\mathcal K}$, if the 
functions $\theta^k$ and $\vartheta^k$ satisfy 
\begin{align}
\begin{split}
\partial_t \theta^k(t)    &= - F \big( \vartheta^k(t) \big), \\
\partial_t \vartheta^k(t) &= - R\big( \vartheta^k(t) \big),
\end{split}
\end{align}
with $F$ and $R$ given by \eqref{eq:Riccati2}.
\end{proposition}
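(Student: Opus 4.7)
The plan is to recognize the proposition as an immediate specialization of Theorem \ref{th:terminal-measure-model} to the affine setting, with $X^k\equiv X$ and $g^k$ affine in $x$. I will verify the three hypotheses of that theorem: (i) the Lipschitz assumption \ref{lip1}, (ii) the integrability assumption \ref{int1} (already assumed), and (iii) the drift condition \eqref{terminal-locmartcond}, which I expect to reduce exactly to the generalized Riccati system stated in the proposition.

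For (i), since $\vartheta^k\in C^1([0,T_N])$ is bounded on the compact interval, the function $g^k(t,x)=\theta^k(t)+\vartheta^k(t)x$ is globally Lipschitz in $x$ with constant $\tilde K^k=\sup_{t\in[0,T_N]}|\vartheta^k(t)|$. For (iii), I would simply substitute the derivatives
\begin{align*}
\partial_t g^k(t,x) &= \partial_t\theta^k(t)+\partial_t\vartheta^k(t)\,x,\qquad
\De g^k(t,x)=\vartheta^k(t),\qquad \De^2 g^k(t,x)=0,\\
g^k(t,X_{t-}+\xi)-g^k(t,X_{t-}) &= \vartheta^k(t)\,\xi,
\end{align*}
into \eqref{terminal-locmartcond}, together with the affine characteristics $b^X_t=\tilde b+\beta X_{t-}$, $c^X_t=2\alpha X_{t-}$, $F^X_t(\de\xi)=F^1(\de\xi)+X_{t-}F^2(\de\xi)$. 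The resulting identity is \emph{affine} in $X_{t-}$, so I would collect terms: the part independent of $X_{t-}$ forces
\begin{equation*}
\partial_t\theta^k(t) = -\Bigl( b\,\vartheta^k(t)+\int_{\xi>0}\bigl(\e^{\vartheta^k(t)\xi}-1\bigr)F^1(\de\xi)\Bigr) = -F\bigl(\vartheta^k(t)\bigr),
\end{equation*}
while the coefficient of $X_{t-}$ forces
\begin{equation*}
\partial_t\vartheta^k(t) = -\Bigl(\alpha\,\vartheta^k(t)^2+\beta\,\vartheta^k(t)+\int_{\xi>0}\bigl(\e^{\vartheta^k(t)\xi}-1-\vartheta^k(t)h(\xi)\bigr)F^2(\de\xi)\Bigr) = -R\bigl(\vartheta^k(t)\bigr),
\end{equation*}
which is exactly the stated Riccati system. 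Granted the drift condition, Theorem \ref{th:terminal-measure-model} delivers the uniformly integrable $\P_N$-martingale property of $F(\cdot,T_k,T_N)=\e^{\theta^k(\cdot)+\vartheta^k(\cdot)X}$.

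The main obstacle is the mismatch between the truncation functions: the affine characteristics above are written with $h(\xi)=1\wedge\xi$, whereas Theorem \ref{th:terminal-measure-model} uses $h(\xi)=\xi$ thanks to the exponential moment encoded in \ref{int1}. To reconcile them I would, as a preliminary step, rewrite the drift of $X$ under the identity truncation by absorbing $\int(\xi-h(\xi))F^X_t(\de\xi)$ into $b^X_t$; the state-affine splitting of this integral into a constant part (built from $F^1$) and a part linear in $X_{t-}$ (built from $F^2$) then feeds cleanly into the decomposition above, so that the constant and linear coefficients match $F$ and $R$ as defined in \eqref{eq:Riccati2}. The integrability required for this rewriting, as well as for the finiteness of the integrals appearing in $F$ and $R$ at the argument $\vartheta^k(t)$, is precisely what \ref{int1} guarantees, so no further technical hurdles are expected.
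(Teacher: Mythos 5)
Your proposal is correct and follows essentially the same route as the paper's proof: verify \ref{lip1} (affine $g^k$ with $C^1$ coefficients on a compact interval) and use the assumed \ref{int1}, apply Theorem \ref{th:terminal-measure-model}, and insert $\partial_t g^k$, $\De g^k=\vartheta^k(t)$, $\De^2 g^k=0$ together with the affine characteristics of $X$ into \eqref{terminal-locmartcond}, matching constant and $X_{t-}$-linear terms to recover the Riccati system with $F$ and $R$. Your explicit handling of the passage from the truncation $h(\xi)=1\wedge\xi$ to the identity truncation is precisely the ``straightforward calculation'' the paper leaves implicit, and it works as you describe.
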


\begin{proof}
The process $X$ satisfies \ref{int1} by assumption, while the functions $g^k$ 
satisfy \ref{lip1}. Therefore, we can apply Theorem 
\ref{th:terminal-measure-model} and the result follows after straightforward 
calculations, by inserting the characteristics of $X$ into the 
\eqref{terminal-locmartcond} condition and using that
\begin{equation*}
\partial_t g^k(t,x) =  \partial_t \theta^k(t) + \partial_t \vartheta^k(t) x, \ \
\partial_x  g^k(t, x) = \vartheta^k(t), \ \ 
\partial_{xx}  g^k(t, x) = 0. \qedhere
\end{equation*}
\end{proof}

\begin{remark}
The affine forward price models given by \eqref{eq:affine-model-2} satisfy 
Axioms \ref{a1} and \ref{a2} as well as Properties \ref{b1}--\ref{b3}. Property 
\ref{b4} is not satisfied and the model has to be calibrated to the initial 
term structure, similarly to short rate models. Indeed, notice that the initial 
forward price $F(0,T_k,T_N)$ does not appear in the function $g^k(t,x)$, 
contrary to the previous two examples.\footnote{We could, of course, use the 
following affine function $g^k(t,x) = \log F(0,T_k,T_N) + \theta^k(t) + 
\vartheta^k(t)x$ and the model fits automatically the initial term structure. 
However, it becomes then difficult to provide models that produce non-negative 
LIBOR rates.} Moreover, these models satisfy Property \ref{c} if and only if 
the functions $\theta^k$ and $\vartheta^k$ are non-negative; compare also with 
Remark \ref{rem-affine-boudedfrombelow}.
\end{remark}

The \alms introduced by \citet{KellerResselPapapantoleonTeichmann09} can  
naturally be embedded in this construction. More precisely, we have the 
following.

\begin{corollary}
The \alms whose dynamics are provided by 
$$
F(t, T_k, T_N) 
  = \E_N \left[  \e^{u_k X_{T_N}} | \mathcal F_t \right] 
  = \e^{\phi(T_N-t, u_k) + \psi(T_N-t,  u_k) X_t } ,
$$
with parameters $u_k \in \er_+$ for $k \in \bar{\mathcal K}$, is a special case 
of the affine forward price models with
\begin{align*}
\theta^k(t) := \phi(T_N-t, u_k)
\quad \text{ and } \quad 
\vartheta^k(t) :=  \psi(T_N-t, u_k),
\end{align*}
where $\phi(\cdot,u_k)$ and $ \psi(\cdot,u_k)$ are solutions to 
\eqref{eq:Riccati1}. 
\end{corollary}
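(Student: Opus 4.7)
The plan is to verify the two claims in the corollary separately: first that the conditional expectation $\E_N[\e^{u_k X_{T_N}} \mid \mathcal F_t]$ admits the given exponential-affine representation, and second that the resulting $\theta^k$ and $\vartheta^k$ fit into the framework of Proposition \ref{prop:affine}.

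First, I would invoke the Markov property of $X$ together with its defining affine property \eqref{affine-def}. Since $u_k \in \er_+ \subseteq \mathcal I_{T_N}$ (by the integrability assumption underlying the ALM construction of \citet{KellerResselPapapantoleonTeichmann09}), the random variable $\e^{u_k X_{T_N}}$ is $\P_N$-integrable, so $F(\cdot,T_k,T_N)$ as defined by the conditional expectation is automatically a $\P_N$-uniformly integrable martingale. Evaluating the conditional expectation by the Markov property and \eqref{affine-def} at the (random) starting point $X_t$ gives
\begin{equation*}
\E_N\bigl[\e^{u_k X_{T_N}}\mid \mathcal F_t\bigr]
  = \exp\bigl(\phi(T_N - t, u_k) + \psi(T_N - t, u_k)\,X_t\bigr),
\end{equation*}
which is precisely the affine forward price representation \eqref{eq:affine-model-2} with $\theta^k(t):=\phi(T_N-t,u_k)$ and $\vartheta^k(t):=\psi(T_N-t,u_k)$.

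Second, I would check that these $\theta^k$ and $\vartheta^k$ satisfy the ODE system of Proposition \ref{prop:affine}. Because $\phi(\cdot,u_k)$ and $\psi(\cdot,u_k)$ are $C^1$ solutions to the generalized Riccati system \eqref{eq:Riccati1}, the chain rule applied to the time reversal $s \mapsto T_N - s$ yields
\begin{equation*}
\partial_t\theta^k(t) = -F\bigl(\psi(T_N-t,u_k)\bigr) = -F\bigl(\vartheta^k(t)\bigr),
\qquad
\partial_t\vartheta^k(t) = -R\bigl(\vartheta^k(t)\bigr),
\end{equation*}
which are exactly the ODEs in Proposition \ref{prop:affine}. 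The assumption \ref{lip1} holds with $\tilde K^k = \sup_{t\in[0,T_N]}|\vartheta^k(t)|$, finite by continuity of $\psi(\cdot,u_k)$ on $[0,T_N]$, while \ref{int1} follows from the exponential moment condition $u_k\in\mathcal I_{T_N}$ built into the ALM setup.

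The only subtlety to keep straight is the minus sign generated by the time reversal in the chain rule, which is why the ALM parameters $(\phi,\psi)$ solve the \emph{forward} Riccati system \eqref{eq:Riccati1} whereas $(\theta^k,\vartheta^k)$ solve the \emph{backward} system of Proposition \ref{prop:affine}. Beyond that, the identification is purely a matter of matching parametrizations, so no genuine obstacle arises.
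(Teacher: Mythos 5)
Your argument is correct and is exactly the intended one (the paper states this corollary without proof, as it follows directly from the affine transform formula and Proposition \ref{prop:affine}): the Markov/affine property yields the exponential-affine representation of the conditional expectation, and the time-reversal chain rule turns the forward Riccati system \eqref{eq:Riccati1} for $(\phi,\psi)$ into the backward system for $(\theta^k,\vartheta^k)$ required by Proposition \ref{prop:affine}. The only small imprecision is the phrase $\er_+\subseteq\mathcal I_{T_N}$, which is not automatic; what is actually used (and what you correctly hedge towards) is that the ALM setup assumes $u_k\in\mathcal I_{T_N}$, so the integrability and the closing of the martingale by $\e^{u_k X_{T_N}}$ are guaranteed.
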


\appendix
\section{Semimartingale characteristics and martingales}
\label{auxiliary}

Let $(\Omega,\lijepof,(\lijepof_{t})_{t\in[0,T_*]},\P)$ denote a complete 
stochastic basis and $T_*$ denote a finite time horizon. Let $X$ be an 
$\erd$-valued semimartingale on this basis whose characteristics are absolutely 
continuous, i.e. its local characteristics are given by $(b,c,F;A)$ with 
$A_{t}=t$, for some truncation function $h$; cf. Jacod and Shiryaev
\citeyearpar[Prop.~II.2.9]{JacodShiryaev03}. Moreover, let $f:\er_{+}\times\erd\to\er$ 
be a function of class $C^{1,2}(\er_{+}\times\erd)$. 

The process $f(\cdot, X)$ is a real-valued semimartingale which has again 
absolutely continuous characteristics. Let us denote its local characteristics 
by $(b^{f}, c^{f}, F^{f})$ for a truncation function $h^{f}$. Then, noting that 
It\^o's formula holds for the function $f \in C^{1,2}(\er_{+} \times \erd)$ and 
reasoning as in the proof of Corollary A.6 from \citet{GollKallsen00}, we have 
that
\begin{align}\label{eq:local_char_f}
b^{f}_{t} &= \fpoint(t, X_{t-}) + \la \De f(t, X_{t-}), b_{t} \ra 
    + \frac{1}{2} \sum_{i,j=1}^{d} \De_{ij}^{2} f(t, X_{t-}) c_{t}^{ij} 
  \notag \\ \notag
& \quad  + \int_{\erd} \left( h^{f}\big(f(t, X_{t-}+x)-f(t, X_{t-})\big)
          - \la \De f(t, X_{t-}), h(x)\ra\right) F_{t}(\de x)\\
c^{f}_{t} &= \big\la \De f(t, X_{t-}), c_{t} \De f(t, X_{t-})\big\ra
 \phantom{\int}\\ \notag
F^{f}_t(G)  &= \int_{\erd} \indik_{G} \big(f(t, X_{t-}+x)-f(t, X_{t-})\big) 
	      F_{t}(\de x), \quad \quad G \in \lijepob(\er \setminus \{0\}).
\end{align}

\begin{proposition}\label{proplocmart}
Let $X$ be an $\erd$-valued semimartingale with absolutely continuous 
characteristics $(b,c,F)$ and let $f: \er_{+} \times \erd \to \er$ be a 
function of class $C^{1, 2}$ such that the process $Y$ defined by 
\begin{align}\label{def-exp-fun}
Y_{t} := \e^{f(t, X_{t})}
\end{align}
is exponentially special. If the following condition holds
\begin{align}\label{condlocmart}
\la \De f(t, X_{t-}), b_t \ra 
 &= - \fpoint(t, X_{t-})
  - \frac{1}{2} \sum_{i,j=1}^{d} \De_{ij}^{2} f(t,X_{t-}) c_{t}^{ij} \notag\\
 &\quad - \frac{1}{2} \big\la \De f(t, X_{t-}), c_t \De f(t, X_{t-})\big\ra 
 \phantom{\int_\er}\\
 &- \int_{\erd} \Big( \e^{f(t, X_{t-}+x)-f(t, X_{t-})} - 1 -  \la 
    \De f(t, X_{t-}), h(x)\ra\Big) F_{t}(\de x), \nonumber
\end{align}
then $Y$ is a local martingale.
\end{proposition}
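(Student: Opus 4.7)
The plan is to reduce the statement to a standard characterisation of when the exponential of a real-valued semimartingale is a local martingale, and then verify that the drift condition \eqref{condlocmart} is exactly the vanishing of the relevant compensator after substituting the characteristics of $f(\cdot,X)$ already recorded in \eqref{eq:local_char_f}.

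More precisely, I would first invoke the general fact that if $Z$ is a real-valued semimartingale with absolutely continuous local characteristics $(b^Z, c^Z, F^Z)$ relative to a truncation function $h^Z$ and $\e^{Z}$ is exponentially special, then $\e^{Z}$ is a local martingale if and only if
\begin{equation*}
b^Z_t + \tfrac{1}{2} c^Z_t + \int_{\er} \bigl( \e^{y} - 1 - h^Z(y) \bigr) F^Z_t(\de y) = 0,
\end{equation*}
for $\P\otimes\de t$-a.e.\ $(\omega,t)$. This is a well-known consequence of the Itô/Kunita formula for exponentials of semimartingales together with the canonical representation of $Z$, and it is the one-dimensional analogue of \citet[Thm.~2.18]{KallsenShiryaev02b}-type statements for exponential compensators; the exponential specialness of $\e^{Z}$ guarantees that the integral term is well-defined and that the compensator exists.

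Applying this to $Z:=f(\cdot,X)$, whose local characteristics $(b^f,c^f,F^f)$ with respect to some truncation $h^f$ are given in \eqref{eq:local_char_f}, the local martingale condition for $Y=\e^{f(\cdot,X)}$ becomes
\begin{align*}
0 &= \fpoint(t,X_{t-}) + \la \De f(t,X_{t-}), b_t\ra
  + \tfrac{1}{2}\sum_{i,j=1}^d \De_{ij}^2 f(t,X_{t-})\, c_t^{ij} \\
  &\quad + \int_{\erd}\bigl( h^f(f(t,X_{t-}+x)-f(t,X_{t-})) - \la \De f(t,X_{t-}), h(x)\ra\bigr) F_t(\de x) \\
  &\quad + \tfrac{1}{2}\la \De f(t,X_{t-}), c_t \De f(t,X_{t-})\ra \\
  &\quad + \int_{\erd}\bigl( \e^{f(t,X_{t-}+x)-f(t,X_{t-})} - 1 - h^f(f(t,X_{t-}+x)-f(t,X_{t-}))\bigr) F_t(\de x).
\end{align*}
Here the first two integrals come from $b^f$ and the change-of-variables formula for $F^f$, and the third term is $\tfrac{1}{2}c^f$. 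The two $h^f$-contributions in the integrands cancel, so the truncation function $h^f$ disappears entirely, yielding precisely \eqref{condlocmart} after rearrangement.

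The only real subtlety I expect is bookkeeping of truncation functions: one must check that the cancellation of $h^f$ is legitimate and that the resulting jump integral $\int(\e^{f(t,X_{t-}+x)-f(t,X_{t-})}-1-\la\De f(t,X_{t-}),h(x)\ra)F_t(\de x)$ is absolutely convergent, which is exactly what the assumption that $Y$ is exponentially special provides (it implies integrability of $(\e^{\Delta f}-1)\1_{\{|\Delta f|>1\}}$ against $F_t$, while the $C^{1,2}$ regularity and the small-jump behaviour take care of the rest). Once this is in place, the proof is essentially a direct substitution; no induction, localisation argument, or additional tightness estimate is needed beyond what is already built into the hypothesis of exponential specialness.
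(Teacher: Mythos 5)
Your proposal is correct and takes essentially the same route as the paper: the ``standard characterisation'' you invoke is precisely what the paper extracts from Kallsen and Shiryaev's Theorems 2.18 and 2.19 (the exponential compensator of the quasi-left continuous semimartingale $f(\cdot,X)$ is $\int_0^\cdot\big(b^f_t+\tfrac12 c^f_t+\int(\e^y-1-h^f(y))F^f_t(\de y)\big)\de t$, and $\e^{f(\cdot,X)}$ is a local martingale iff it vanishes), and your substitution of \eqref{eq:local_char_f} with the cancellation of the $h^f$-terms is the same computation the paper performs. No gap.
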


\begin{proof}
The proof follows from Theorem 2.18 in \citet{KallsenShiryaev02}: set 
$\theta=1$ and apply the theorem to the semimartingale $f(\cdot, X)$. Indeed, 
since $f(\cdot, X)$ has absolutely continuous characteristics it is also 
quasi-left continuous, hence assertions (6) and (1) of Theorem 2.18. yield
$$
K^{f(\cdot, X)} (1)
  = \widetilde{K}^{f(\cdot, X)}(1) 
  = \int_{0}^{\cdot} \Big( b^{f}_{t} + \frac{1}{2} c^{f}_{t} 
    + \int_{\er} \big( \e^{x}-1-h^{f}(x)\big) F^{f}_{t}(\de x) \Big) \de t.
$$
By definition of the exponential compensator and Theorem 2.19 in Kallsen and Shiryaev
\citeyearpar{KallsenShiryaev02} it follows that
$$
\e^{f(\cdot, X)- K^{f(\cdot, X)}(1)} \in \lijepom_{\loc}.
$$
Therefore, $\e^{f(\cdot, X)} \in \lijepom_{\loc}$ if and only if $K^{f(\cdot, 
X)}(1)=0$ up to indistinguishability. Equivalently,
$$
b^{f}_{t} + \frac{1}{2} c^{f}_{t} 
  + \int_{\er} \big( \e^{x}-1-h^{f}(x)\big) F^{f}_{t}(\de x) = 0
$$
for every $t$. Inserting the expressions for $b^{f}, c^{f}$ and $F^{f}$, cf. 
\eqref{eq:local_char_f},  into the above equality yields condition 
\eqref{condlocmart}.
\end{proof}

\begin{proposition}\label{propUImart}
Let $X$ be an $\erd$-valued semimartingale with absolutely continuous 
characteristics $(b,c,F)$ such that
\begin{equation}\label{firstUI}
\int_{0}^{T_*} \int_{\erd} (|x|^{2} \wedge 1) F_{t}(\de x) \de t + 
\int_{0}^{T_*} \int_{|x|>1} |x| \e^{K |x|} F_{t}(\de x) \de t < C_{1}
\end{equation}
and
\begin{equation}\label{secondUI}
\int_{0}^{T_*} \| c_{t}\| \de t < C_{2},
\end{equation}
for some deterministic constants $C_{1},C_{2}>0$. Moreover, let $f: \er_{+} 
\times \erd \to \er$ be a function of class $C^{1, 2}$ and globally Lipschitz, 
i.e. there exists a constant $K>0$ such that
$$
|f(t, x)-f(t,y)| \leq K |x-y|, \qquad t \geq 0, \ \ x,y \in \erd.
$$
Then, the process $f(\cdot, X)$ is exponentially special, while the process $Y$ 
defined by \eqref{def-exp-fun} and satisfying \eqref{condlocmart} is a 
uniformly integrable martingale.
\end{proposition}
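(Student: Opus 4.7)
My approach proceeds in three stages. First, I establish that $f(\cdot,X)$ is exponentially special. By \eqref{eq:local_char_f}, the jump measure $F^{f}_{t}$ of $f(\cdot,X)$ is the pushforward of $F_{t}$ under $x\mapsto f(t,X_{t-}+x)-f(t,X_{t-})$, and assumption \ref{lip} yields $|f(t,X_{t-}+x)-f(t,X_{t-})|\le K|x|$. Therefore
\[
\int_{0}^{T_*}\!\!\int_{|y|>1}\e^{|y|}\,F^{f}_{t}(\de y)\,\de t
\;\le\;\int_{0}^{T_*}\!\!\int_{|x|>1/K}\e^{K|x|}\,F_{t}(\de x)\,\de t,
\]
and the right-hand side is finite by \eqref{firstUI}. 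Combined with \eqref{secondUI} and the $C^{1,2}$ regularity of $f$, this exponential integrability of the big jumps is exactly the criterion of \citet{KallsenShiryaev02} for $f(\cdot,X)$ to be exponentially special. With this in hand, the drift condition \eqref{condlocmart} places us in the setting of Proposition \ref{proplocmart}, which identifies $Y=\e^{f(\cdot,X)}$ as a $\P$-local martingale.

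The second, and more delicate, stage is to upgrade the local martingale property to uniform integrability. Since $Y>0$, it is a supermartingale with $\E[Y_{t}]\le Y_{0}$, so it suffices to show $\E[Y_{T_*}]=Y_{0}$. Setting $\tau_{n}:=\inf\{t\ge 0:Y_{t}\ge n\}\wedge T_*$, the stopped processes $Y^{\tau_{n}}$ are bounded martingales with $\E[Y_{\tau_{n}}]=Y_{0}$, and $Y_{\tau_{n}}\to Y_{T_*}$ almost surely as $n\to\infty$. I plan to derive $L^{1}$-convergence by proving that $\{Y_{\tau_{n}}\}_{n\in\N}$ is uniformly integrable, via a L\'epingle--M\'emin--type estimate applied to $f(\cdot,X)$: its continuous martingale part has quadratic variation bounded by $K^{2}\int_{0}^{T_*}\|c_{t}\|\,\de t\le K^{2} C_{2}$ thanks to \ref{lip} and \eqref{secondUI}, while the predictable compensator of the exponential purely discontinuous part is controlled by \eqref{firstUI} through the same Lipschitz bound.

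The main technical obstacle is this last step. Because the Lipschitz constant $K$ in \ref{lip} coincides exactly with the exponent in \ref{int}, there is no slack to pass to a higher moment $\E[Y_{t}^{1+\varepsilon}]$ by simply repeating the argument at a strictly larger exponent. The small but crucial factor $|x|$ built into \eqref{firstUI} is what supplies the margin needed to absorb the remainder terms $\e^{y}-1-y$ in the exponential-compensator estimate and to deduce uniform integrability of $\{Y_{\tau_{n}}\}$; the quantitative bookkeeping at this point is where the technical care will have to be concentrated.
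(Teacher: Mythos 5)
Your first stage (exponential specialness) is fine and matches the paper: the jump measure $F^f$ of $f(\cdot,X)$ is the image of $F$ under $x\mapsto f(t,X_{t-}+x)-f(t,X_{t-})$, the Lipschitz bound gives $|f(t,X_{t-}+x)-f(t,X_{t-})|\le K|x|$, and \eqref{firstUI} does the rest. The problem is the second stage, which is the actual content of the proposition: you do not prove uniform integrability, you announce a plan for it. The supermartingale reduction and the stopping times $\tau_n=\inf\{t:Y_t\ge n\}\wedge T_*$ do not by themselves give anything (note also that $Y^{\tau_n}$ need not be bounded -- the jump at $\tau_n$ can exceed $n$ -- and $\mathbb{E}[Y_{\tau_n}]=Y_0$ is not automatic for that particular sequence; you would have to work with a genuine localizing sequence), and the decisive L\'epingle--M\'emin--type estimate is left as ``quantitative bookkeeping'' to be done later. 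Since the whole point of Proposition \ref{propUImart} is precisely this upgrade from local to uniformly integrable martingale, the proof is incomplete at its core.

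For comparison, the paper closes this step by citing a ready-made criterion, Proposition 3.4 of \citet{Criens_Glau_Grbac_2015} (an extension of the L\'epingle--M\'emin conditions): a positive exponential local martingale $\e^{f(\cdot,X)}$ is a uniformly integrable martingale provided
\[
\int_{0}^{T_*} \Big( c^{f}_{t} + \int_{\er} \big[ (|y|^{2}\wedge 1) + |y|\,\e^{y}\,\indik_{\{|y|>1\}}\big] F^{f}_t(\de y)\Big)\dt
\]
is bounded by a deterministic constant, i.e.\ condition \eqref{firUI}. Verifying \eqref{firUI} is then a few lines with exactly the tools you already used: $c^f_t=\la \De f(t,X_{t-}),c_t\De f(t,X_{t-})\ra\le \|c_t\|\,|\De f(t,X_{t-})|^2$ is integrable by \eqref{secondUI} and boundedness of $\De f$, and the jump integral is dominated, via $|y|\le K|x|$ and $\e^{y}\le\e^{K|x|}$, by $\int_0^{T_*}\int_{\erd}\{(K^2|x|^2\wedge1)+K|x|\e^{K|x|}\indik_{\{K|x|>1\}}\}F_t(\dx)\dt$, which is finite by \eqref{firstUI}. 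In particular your worry about ``no slack'' because the Lipschitz constant coincides with the exponent in \eqref{firstUI} is a red herring: no $(1+\varepsilon)$-moment of $Y$ is needed, because the criterion is not a higher-moment argument but is tailored exactly to the $|x|\e^{K|x|}$-integrability assumed in \eqref{firstUI} (this is why the hypothesis carries the extra factor $|x|$). If you do not want to cite that result, you must actually prove a statement of this type, which is substantially more work than the remainder of your sketch suggests.
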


\begin{proof} 
The process $f(\cdot, X)$ is exponentially special if and only if
$$
\indik_{\{|x|>1\}} \e^{x} \ast \nu^{f} \in \lijepov.
$$
Hence, it suffices to show that $\indik_{\{|x|>1\}} \e^{x} \ast \nu^{f}_{T_*} < 
\infty$, as the integrand is positive. Since $f$ is globally Lipschitz, we have
\begin{align*}
\indik_{\{|x|>1\}} \e^{x} \ast \nu^{f}_{T_*} 
  &= \int_{0}^{T_*} \int_{|x|>1} \e^{x} F_t^{f} (\de x) \de t \\
  &\stackrel{\eqref{eq:local_char_f}}{=} 
     \int_{0}^{T_*} \int_{\erd} \indik_{\{|f(t, X_{t-}+x)-f(t, X_{t-})|>1\}} 
     \e^{f(t, X_{t-}+x) - f(t, X_{t-})} F_t (\de x) \de t \\
  &\leq \int_{0}^{T_*} \int_{K |x|>1} \e^{K |x|} F_t (\de x) \de t < \infty,
\end{align*}
which holds by the Lipschitz property and \eqref{firstUI}.

Moreover, if  $F=\e^{f(\cdot, X)} \in \lijepom_{\loc}$, using Proposition 3.4 
in \citet{Criens_Glau_Grbac_2015} it is also a uniformly integrable martingale 
if the following condition holds:
\begin{equation}\label{firUI}
\int_{0}^{T_*} \bigg( c^{f}_{t} + \int_{\erd} \Big[ (|x|^{2} \wedge 1) 
  + |x| \e^{x} \indik_{\{|x| > 1\}} \Big] F^{f}_t(\dx) \bigg) \dt < C^{f},
\end{equation}
for some constant $C^{f} >0$. We first check the condition for the diffusion 
coefficient
\begin{align*}
\int_{0}^{T_*} c^{f}_{t} \de t 
  &= \int_{0}^{T_*} \la \De f(t, X_{t-}), c_{t} \De f(t, X_{t-}) \ra \de t 
  \leq \int_{0}^{T_*} \|c_{t}\| |\De f(t, X_{t-})|^2 \de t < C_{1}^{f},
\end{align*}
which follows from \eqref{secondUI} and the fact that $\De f(\cdot, X_{-})$ is 
bounded as a consequence of $f$ being globally Lipschitz. As for the jump part, 
we have that
\begin{multline*}
\int_{0}^{T_*} \int_{\er} (|x|^{2} \wedge 1) F_t^{f} (\de x) \de t 
   + \int_{0}^{T_*} \int_{|x|>1} |x| \e^{x} F_t^{f} (\de x) \de t \\
  \stackrel{\eqref{eq:local_char_f}}{=}
    \int_{0}^{T_*} \int_{\erd} (|f(t, X_{t-}+x)-f(t, X_{t-})|^{2} \wedge 1) 
      F_t(\de x) \de t\\
  \quad+ \int_{0}^{T_*} \int_{\erd} \indik_{\{|f(t,X_{t-}+x)-f(t,X_{t-})|>1\}} 
    \\ \qquad\qquad \times |f(t, X_{t-}+x) - f(t, X_{t-})| \,
    \e^{f(t, X_{t-}+x) - f(t, X_{t-})} F_t(\de x) \de t\\
  \leq \int_{0}^{T_*} \int_{\erd} (K^{2} |x|^{2} \wedge 1) F(\de x) \de t 
    + \int_{0}^{T_*} \int_{K |x|> 1} K |x| \e^{K |x|} F^{f} (\de x) \de t 
  < C_{2}^{f},
\end{multline*}
using again the Lipschitz property and \eqref{firstUI}.
\end{proof}

Next, we provide the representation of $Y$ as a stochastic exponential.

\begin{lemma}
Let $X$ be an $\erd$-valued semimartingale with absolutely continuous 
characteristics $(B,C,\nu)$ and let $f:\er_{+} \times \erd \to \er_{+}$ be a 
function of class $C^{1,2} (\er_{+} \times \erd)$. Define a real-valued 
semimartingale $Y$ via \eqref{def-exp-fun}. If $Y \in \lijepom_{\loc}$, then it 
can be written as
$$
Y = \lijepoe \big( \De f(\cdot, X_{-}) \cdot X^{c} 
		  + W(\cdot, x) \ast (\mu^{X} - \nu) \big),
$$
where $X^{c}$ is the continuous martingale part of $X$, $\mu^{X}$ is the random 
measure of jumps of $X$ with compensator $\nu$ and
$$
W(\cdot, x):= \e^{f(\cdot, X_{-} +x) - f(\cdot, X_{-})} -1.
$$
\end{lemma}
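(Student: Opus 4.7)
The plan is to identify the continuous and purely discontinuous local martingale parts of $Y$ directly from its definition and then recognize them as the two ingredients of a stochastic exponential.

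First I would apply It\^o's formula to the $C^{1,2}$ composition $U_{t} := f(t, X_{t})$, viewing $X$ as a $d$-dimensional semimartingale. This gives the canonical decomposition of $U$ into a continuous martingale part, a finite variation part, and a jump part. In particular, the continuous local martingale part is exactly
\begin{equation*}
U^{c} = \De f(\cdot, X_{-}) \cdot X^{c},
\end{equation*}
while the jumps of $U$ are determined by the jumps of $X$ through $\Delta U_{t} = f(t, X_{t-} + \Delta X_{t}) - f(t, X_{t-})$, since $f$ is continuous in $t$.

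Next I would apply It\^o's formula to $Y = \e^{U}$ to get the semimartingale differential
\begin{equation*}
\de Y_{t} = Y_{t-}\, \de U_{t} + \tfrac{1}{2} Y_{t-}\, \de \langle U^{c}, U^{c}\rangle_{t} + Y_{t-}\big(\e^{\Delta U_{t}} - 1 - \Delta U_{t}\big),
\end{equation*}
from which one reads off that the continuous local martingale part of $Y$ is $Y^{c} = Y_{-}\, \De f(\cdot, X_{-}) \cdot X^{c}$ and that the jumps are $\Delta Y_{t} = Y_{t-}(\e^{\Delta U_{t}} - 1) = Y_{t-}\, W(t, \Delta X_{t})$, with $W$ as defined in the statement.

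Now, since $Y \in \lijepom_{\loc}$ by assumption, its canonical decomposition $Y = Y_{0} + Y^{c} + Y^{d}$ is forced: the purely discontinuous local martingale part must be the compensated sum of jumps, i.e. $Y^{d} = Y_{-}\, W(\cdot, x) \ast (\mu^{X} - \nu)$. The requisite local integrability of $W$ with respect to $\mu^{X} - \nu$ follows from the fact that $Y$ is a local martingale with these jumps (compensating the drift terms from It\^o's formula via condition \eqref{condlocmart} rearranges precisely into the integrability one needs for $W \ast (\mu^{X} - \nu)$ to define a purely discontinuous local martingale; cf.\ Jacod and Shiryaev, Theorem II.1.33). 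Putting this together,
\begin{equation*}
Y = Y_{0} + Y_{-} \cdot \big( \De f(\cdot, X_{-}) \cdot X^{c} + W(\cdot, x) \ast (\mu^{X} - \nu) \big),
\end{equation*}
which is by definition the stochastic exponential equation, yielding the claimed representation (up to the initial value $Y_{0}$, which equals $1$ under the implicit normalization $f(0, X_{0}) = 0$; otherwise the factor $Y_{0}$ should be prepended to $\lijepoe(\cdot)$).

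The main obstacle is the jump-measure bookkeeping: verifying rigorously that $W(\cdot, x) \ast (\mu^{X} - \nu)$ is well-defined and is indeed the purely discontinuous part of $Y$. This is the step where one must use both the local martingale property of $Y$ and the drift condition \eqref{condlocmart} to control $W \ast \nu$ in the sense required by the theory of integration against compensated random measures; the algebra from It\^o's formula itself is otherwise routine.
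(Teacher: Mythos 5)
Your argument is correct, but it follows a different route than the paper. The paper's proof is essentially a citation: it applies Theorem 2.19 of Kallsen and Shiryaev to the semimartingale $f(\cdot,X)$ (using that it is quasi-left continuous), which directly yields $\e^{f(\cdot,X)}=\lijepoe\big(f(\cdot,X)^c+(\e^x-1)\ast(\mu^f-\nu^f)\big)$, and then translates the jump measure and characteristics of $f(\cdot,X)$ into those of $X$ via \eqref{eq:local_char_f}. You instead rederive this by hand: It\^o's formula identifies $Y^c=Y_-\cdot\big(\De f(\cdot,X_-)\cdot X^c\big)$ and $\Delta Y_t=Y_{t-}W(t,\Delta X_t)$, and the local martingale assumption forces the purely discontinuous part to be the compensated jump integral, giving the stochastic exponential SDE $Y=Y_0+Y_-\cdot\big(\De f(\cdot,X_-)\cdot X^c+W\ast(\mu^X-\nu)\big)$. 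What your approach buys is self-containedness and the fact that you work at the level of $\mu^X$ from the start, avoiding the passage through $(\,b^f,c^f,F^f\,)$; what the paper's citation buys is that the exponential-to-stochastic-exponential bookkeeping (exactly the step you flag as the main obstacle) is delegated to a general result. Two small points to tighten in your version: the well-definedness of $W\ast(\mu^X-\nu)$ is cleaner via the local martingale property alone rather than \eqref{condlocmart} (which the lemma does not assume): since $\sum_{s\le t}(\Delta Y_s)^2\le[Y,Y]_t$ and $[Y,Y]^{1/2}$ is locally integrable for the local martingale $Y$, while $\widehat{W}=0$ because the characteristics are absolutely continuous, one gets $Y_-W\in G_{\loc}(\mu^X)$, and then $Y^d=(Y_-W)\ast(\mu^X-\nu)$ by uniqueness of purely discontinuous local martingales with prescribed jumps; finally, to pull $Y_-$ out of the integral (so that the integrand in $\lijepoe(\cdot)$ is $W$ itself) note that $1/Y_-=\e^{-f(\cdot,X_-)}$ is positive, c\`agl\`ad and adapted, hence locally bounded, so $W\in G_{\loc}(\mu^X)$ as well. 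Your remark on the initial value is apt: as stated the identity requires the normalization $f(0,X_0)=0$, otherwise the factor $\e^{f(0,X_0)}$ must multiply the stochastic exponential, a point the paper's formulation also glosses over.
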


\begin{proof}
Theorem 2.19 in \citet{KallsenShiryaev02} yields that
$$
 \e^{f(\cdot, X)} = \lijepoe \left( f(\cdot, X)^{c} + (\e^{x}-1) \ast 
(\mu^{f}-\nu^{f})\right),
$$
using that $f(\cdot, X)$ is quasi-left continuous since $X$ is also quasi-left 
continuous. Here $f(\cdot, X)^{c}$ denotes the continuous martingale part of 
$f(\cdot, X)$ and $\mu^{f}$ its random measure of jumps. The result now follows 
using the form of the local characteristics $c^f, F^f$ of the process 
$f(\cdot,X)$ in \eqref{eq:local_char_f}; see also the proof of Corollary A.6 in 
\citet{GollKallsen00}.
\end{proof}

\begin{lemma}\label{SLgirsanov}
Let $X$ be an $\erd$-valued semimartingale with absolutely continuous 
characteristics $(b,c,F)$ with respect to the truncation function $h$. Let 
$f:\er_{+} \times \erd \to \er_{+}$ be a function of class $C^{1,2}(\er_{+} 
\times \erd)$ and globally Lipschitz. Assume that conditions 
\eqref{condlocmart}, \eqref{firstUI} and \eqref{secondUI} are satisfied.

Define the probability measure $\P'\sim\P$ via
$$
\frac{\de \P'}{\de \P}\Big|_{\lijepof_\cdot} := \e^{f(\cdot, X)}\,.
$$
Then, the $\P'$-characteristics of the semimartingale $X$ are absolutely 
continuous and provided by $(b',c',F')$, where
\begin{align*}
b'_t &= b_t + c_t \beta_t + \int_{\erd} ( Y_t(x) - 1  ) h(x) F_t(\de x) \\
c'_t &= c_t \\
F'_t(\de x) &= Y_t(x) F_t(\de x),\phantom{\int^0}
\end{align*}
with $\beta_t = \De f(t,X_{t-})$ and $Y_t(x)= \e^{f(t,X_{t-}+x) - 
f(t,X_{t-})}$, for $t\in \er_{+}$ and $x\in\erd$.
\end{lemma}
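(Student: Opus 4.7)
The plan is to reduce the statement to the classical Girsanov theorem for semimartingales (Jacod and Shiryaev, Thm.~III.3.24) by first exhibiting the density process as a stochastic exponential driven by $X$. Concretely, let $Z_t := \e^{f(t,X_t)}$; by hypothesis this is the density process of $\P'$ with respect to $\P$.

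First I would verify that $Z$ is a genuine positive $\P$-martingale. Positivity is immediate from $f \ge 0$, so $Z > 0$ and hence $\P' \sim \P$ is well defined on the filtration. Conditions \eqref{firstUI}, \eqref{secondUI}, together with the Lipschitz property and \eqref{condlocmart}, allow me to invoke Proposition \ref{propUImart}, which gives that $Z$ is a uniformly integrable martingale.

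Next I would invoke the preceding lemma (stochastic exponential representation) to write
\[
Z = \lijepoe\bigl(N\bigr), \qquad N := \De f(\cdot, X_{-}) \cdot X^{c} + W(\cdot, x) \ast (\mu^{X} - \nu),
\]
where $W(t,x) = \e^{f(t,X_{t-}+x)-f(t,X_{t-})}-1 = Y_t(x)-1$ and $\nu$ denotes the $\P$-compensator of $\mu^X$. In this representation the two Girsanov ingredients are already identified: the predictable process driving the continuous part is $\beta_t = \De f(t,X_{t-})$, and the multiplicative tilt of the jump measure is $Y_t(x)$. A small sanity check here is that the conditions of integrability needed to make $\beta \cdot X^c$ and $W \ast (\mu^X - \nu)$ well defined follow from \ref{lip}-type Lipschitz control on $f$ combined with \eqref{firstUI}--\eqref{secondUI}, exactly as used in Propositions \ref{proplocmart} and \ref{propUImart}.

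Finally I would apply Girsanov's theorem for semimartingales (Jacod--Shiryaev, Thm.~III.3.24 and its Girsanov--Meyer/third form for the characteristics, Thm.~III.3.11 in combination with III.5.19). It gives directly that under $\P' = Z \cdot \P$ the semimartingale $X$ has absolutely continuous characteristics
\[
b'_t = b_t + c_t \beta_t + \int_{\erd} \bigl(Y_t(x)-1\bigr) h(x)\, F_t(\dx),\qquad c'_t = c_t,\qquad F'_t(\dx) = Y_t(x) F_t(\dx),
\]
which is precisely the claim. The modification of the drift comes from the continuous part via $c_t \beta_t$ and from the compensation of the tilted jumps via $(Y_t(x)-1)h(x) F_t(\dx)$; the continuous quadratic variation is invariant under an equivalent change of measure, and the jump measure is tilted multiplicatively by $Y$.

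The only step where care is genuinely needed is the passage from the exponential form $Z = \e^{f(\cdot,X)}$ to the stochastic exponential form $Z = \lijepoe(N)$, because one must check that the jump integrand $W$ lies in the correct class $G_{\mathrm{loc}}(\mu^X)$ so that $W \ast (\mu^X - \nu)$ makes sense. This is where the global Lipschitz property of $f$ together with \eqref{firstUI} is essential: it bounds $|W(t,x)|$ on $\{|x|\le 1\}$ by a linear expression in $|x|$ (via $C^{1,2}$ smoothness of $f$) and $(1+\e^{K|x|})$ on $\{|x|>1\}$, both of which are integrated against $F_t(\dx) \dt$ by hypothesis. Once this is in place, the rest of the argument is a direct invocation of classical Girsanov.
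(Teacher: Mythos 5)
Your proposal is correct and follows essentially the same route as the paper: the paper's proof is precisely to write the density $\e^{f(\cdot,X)}$ as the stochastic exponential $\lijepoe\big(\De f(\cdot,X_-)\cdot X^c + W\ast(\mu^X-\nu)\big)$ via the preceding lemma and then read off $(b',c',F')$ from a Girsanov-type result (the paper cites Proposition 2.6 of Kallsen (2006) where you cite Jacod--Shiryaev III.3.24, which amounts to the same identification of $\beta_t=\De f(t,X_{t-})$ and $Y_t(x)=\e^{f(t,X_{t-}+x)-f(t,X_{t-})}$). Your additional checks (uniform integrability via Proposition \ref{propUImart}, integrability of $W$) are consistent with how those facts are established elsewhere in the paper.
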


\begin{proof}
The result follows directly from the previous lemma and Proposition 2.6 in 
\citet{Kallsen06}.
\end{proof}

\bibliographystyle{plainnat}
\bibliography{references}

\end{document}